\numberwithin{equation}{section}
\theoremstyle{plain}
\newtheorem{theorem}{Theorem}
\newtheorem{proposition}{Proposition}
\newtheorem{lemma}{Lemma}
\theoremstyle{definition}
\newtheorem{definition}{Definition}
\theoremstyle{remark}
\newtheorem{assumption}{Assumption}
\newtheorem{example}{Example}
\DeclareMathOperator*{\argmax}{\textup{argmax}}
\DeclareMathOperator*{\argmin}{\textup{argmin}}
\newcommand{\bX}{\mathbf{X}}
\newcommand{\bx}{\mathbf{x}}
\renewcommand{\P}{\mathbb{P}}
\newcommand{\pl}{\text{PL}}
\newcommand{\C}{\mathcal{C}}
\newcommand{\F}{\mathcal{F}}
\newcommand{\N}{\mathbb{N}}
\newcommand{\R}{\mathbb{R}}
\newcommand{\Chs}{\widehat{C}_{\textsc{hs}}}
\algnewcommand{\algorithmicvariables}{\textbf{global variables}}
\newcommand\reallywidehat[1]{%
\savestack{\tmpbox}{\stretchto{%
  \scaleto{%
    \scalerel*[\widthof{\ensuremath{#1}}]{\kern-.6pt\bigwedge\kern-.6pt}%
    {\rule[-\textheight/2]{1ex}{\textheight}}
  }{\textheight}%
}{0.5ex}}%
\stackon[1pt]{#1}{\tmpbox}%
}
\begin{document}

\begin{frontmatter}
\title{Population based change-point detection for the identification of homozygosity islands}
\runtitle{Change-point detection of homozygosity islands}

\begin{aug}
\author[A]{\fnms{Lucas} \snm{Prates}},
\author[B]{\fnms{Renan} \snm{B. Lemes}}
\author[B]{\fnms{Tábita} \snm{Hünemeier}}
\and
\author[A]{\fnms{Florencia} \snm{Leonardi}}
\address[A]{Instituto de Matemática e Estatística, Universidade de São Paulo}

\address[B]{Instituto de Biociências, Universidade de São Paulo}
\end{aug}

\begin{abstract}
In this paper, we propose a new method for offline change-point detection on some parameters of the distribution of a random vector. We introduce a penalized maximum likelihood approach that can be efficiently computed by a dynamic programming algorithm or approximated by a fast greedy binary splitting algorithm. We prove both algorithms converge almost surely to the set of change-points under very general assumptions on the distribution and independent sampling of the random vector. In particular, we show the assumptions leading to the consistency of the algorithms are satisfied by categorical and Gaussian random variables. This new approach is motivated by the problem of identifying homozygosity islands on the genome of individuals in a population. Our method directly tackles the issue of identification of the homozygosity islands at the population level, without the need of analyzing single individuals and then combining the results, as is made nowadays in \emph{state-of-the-art} approaches. 
\end{abstract}

\begin{keyword}
\kwd{change point detection}
\kwd{segmentation}
\kwd{regularized maximum likelihood}
\kwd{parametric family of distributions}
\kwd{model selection}
\kwd{dimensionality reduction}
\kwd{unsupervised learning}
\end{keyword}

\end{frontmatter}


\section{Introduction}

In diploid organisms, such as humans, each individual's genome is organized into pairs of chromosomes, each half inherited from each parent. When an individual is an offspring of biologically related parents, both chromosomes of the same pair can share identical segments, creating long stretches of consecutive homozygosity, known as runs of homozygosity (ROH).

In the last decades, studies on the identification of ROH carried out in human populations have revealed the presence of ROH even in cosmopolitan non-inbred populations, disclosing an increment of inbreeding levels and the consequent reduction of genetic diversity of populations, which is proportional to the walking distance from Africa, as expected by the out-of-Africa model of human colonization \citep{ceballos2018runs,kirin2010genomic,lemes2018inbreeding,leutenegger2011consanguinity,pemberton2012genomic}.

The distribution of ROH along the chromosomes is very uneven, resulting in some genomic regions having significant absence (coldspots) or excess of ROH (ROH islands) \citep{ceballos2018runs}. The mechanisms for the emergence of these regions are still under discussion. For example, there is evidence that ROH islands could represent regions that harbor genes target of positive selection since low-recombination regions commonly are locations of selective sweeps, in which a new beneficial mutation increases in frequency and becomes fixed, causing the overall reduction in genetic diversity of the region \citep{ceballos2018runs,pemberton2012genomic}.

To detect ROH and ROH islands, the genetic material of individuals from a given population is genotyped, and a set of single nucleotide polymorphisms (SNPs) is obtained.
Each SNP entry is codified to $1$ if that SNP belongs to an ROH for that individual and to  $0$ otherwise, 
where a marker is defined as belonging to an ROH for an individual
if it is surrounded by a region with high frequency of homozygous SNPs.  
Finally, ROH islands are regions in which ROH are most frequent in that population. That is, the positions in the array with high frequency of individual ROH passing through them. Therefore, we can think on the problem of ROH island detection in a population as the identification of regions with high frequencies of 1's in the codified SNPs of individuals of that population. That is, this problem can be regarded as a change-point problem for the parameters of a multidimensional random vector with Bernoulli marginal distributions.

\subsection{Change point detection}
Classically, change point detection refers to the problem of determining the times at which sequential observed data 
undergoes an abrupt change.  In that type of setting, a change point may refer to changes in mean \citep{page1954,tsay1988,Keshavarz2018}, variance \citep{Chen1997,Hawkins2005}, regression slope \citep{Chow1960,Zhongjun2007}, general distributions forms \citep{matteson2014}, or  other types of change. 
Many of these methods have been applied to a wide range of problems such as stream anomaly detection in industry \citep{li2018application}, monitoring of sleep stages using EEG/EMG \citep{agudeloespana2020bayesian}, identification of cyberattacks on networks \citep{Tartakovsky2006}, between many other interesting applications. For more references about
change point methods for time series and other applications, we refer the reader to the book \citet{chen2011parametric} or the
recent reviews \citet{Truong2020} and \citet{Tze-San_Lee_2010}.

As \citet{niu2016} points out, from a methodological point of view, there is a variety of ways to formulate the change-point detection problem: online vs. offline, single change point vs. multiple change points, parametric vs. non-parametric, Bayesian vs. non-Bayesian, and many other approaches when we dive into specific estimation procedures.
But in most of the classical formulations for change point detection, the estimation problem is considered under the hypothesis that the number of observations along the dimension of interest grows. Early results of 
\cite{Hinkley(1970)} show that, even for more straightforward problems, these methods are usually consistent for the change points fractions, but not the change points properly. For offline issues, a possible interpretation is that we have a fixed number of series (usually one) over a unit interval. What grows with the number of observations, in this case, is the resolution over this interval.

As technology advances and other types of data arise from different application areas, new types of change point detection problems arise as well. The ROH islands detection for a population is an example of this. Indeed, here the dimension of interest is discrete and finite, and what grows is the number of independent observations of the random vector. Moreover, since the indices correspond to specific positions such as biological markers, it is essential to recover the exact location of the change points. Therefore, in this case, theoretical consistency results must guarantee convergence to the true change points as the number of independent observations of the joint multidimensional distribution grows. 

\subsection{Our contribution and related works}
In this work we consider several aligned, independent samples of a multidimensional distribution and we assume the distribution has a block structure with different parameters on each block. We focus on detecting such changes in the parameters, that is, the boundaries of the blocks when the number of observations grows. 

We propose a penalized maximum likelihood approach to detect the set of change points in the distribution. We allow for multiple change points without assuming an \emph{a priori} fixed, known number.
The penalized maximum likelihood approach has also been considered recently in \citet{castro2018model, leonardi2021independent}, but on a different type of change-point problem. There, the approach was introduced 
for non-parametric discrete distributions in order to detect points of independence on a multidimensional random vector, under independent or non-independent sampling.
Here we use a similar methodology but on a different problem; namely, the detection of changes on some parameters of the distribution of the random vector.
We also show how to compute the exact estimator by a dynamic programming algorithm and how to approximate the global solution by an efficient greedy binary segmentation algorithm. Albeit using a parametric approach, we do not consider a specific family of distributions. Instead, we state a set of general hypotheses that the family must satisfy and then prove the consistency of the change point set estimator when the number of aligned sequences grows, for the exact estimator as well as for the binary segmentation algorithm. The approach proposed here naturally allows us to directly use bootstrap techniques to assess the variability of the change point estimators. Bootstrap resampling methods have been applied before to construct confidence intervals for the change point location in the single change point problem   \citep{cpd_and_boot,Hu_kov__2008}  or on multiple change point problems \citep{frick2013multiscale}.
We suggest using the bootstrap method to approximate the probability of each index being detected as a change point and to measure the variability of the entire set of change points. A simulation study is presented to provide insights on the behavior of the algorithms and the bootstrap confidence estimation as sample size grows. Finally, we discuss the identification of the ROH islands on the African and European populations, based on the analysis of SNP data from the Human Genome Diversity Project (HGDP) \citep{li2008worldwide}. 
We compare the ROH islands detected by our method with those detected by the procedure proposed in  \citet{mcquillan2008runs,kirin2010genomic}, using the PLINK software \citep{purcell2007plink}, a well-established command line software designed to solve many medical and population genetics problems. 

It is worth mentioning that the methodology for change-point detection developed in this work is very general, and it is not restricted to discrete distributions. In particular, we verify the conditions leading to the consistency of the estimators for standard distributions such as categorical random variables and Gaussian random variables. Still, other families of distributions can satisfy these conditions, widening the possibility of applying our method to many different applied problems.



\section{Change Point Estimation}

\subsection{Definition of the model}

Let $m$ be a positive integer and consider a random vector $\bX=(X_1,\dots, X_m)$ taking 
values in $A^m$, where $A$ is any state space. Let $\Theta$ be a parametric space and consider a family $\mathcal{F} =  
\{f_\theta\,|\, \theta\,\in\,\Theta\}$ of probability densities or probability mass functions over $A^\ell$, for each $\ell\in\N$, indexed by $\theta\in\Theta$. We assume each variable $X_i$, $i=1,\dots,m$  
is distributed according to some $f_{\theta}\in \mathcal F$. Given two integers $r$ and $s$, with 
$r \leq s$, we use the notation $r:s$ for the set 
$\{r,r + 1\ldots,s\}$. Denote by $\C$ the class of all ordered sets
$C = \{c_0, c_1,\ldots, c_{k}\} \subseteq 0:m$ such that $c_0 = 0$ and $c_{k} = m$. 
We say that $C\in\C$ is a change point set for the random 
vector $\mathbf{X}$ if the variables in the subvector $(X_{c_{j-1}+1},\ldots,X_{c_j})$ are
identically distributed with distribution $f_{\theta_{j}}\in \mathcal F$ 
and $\theta_j\neq \theta_{j+1}$ for all $j=1,\dots, k-1$. Observe that the change point vector is
unique, and from now on it will be denoted by $C^*$. Given any set $C\in\C$, we denote by $k_C$ the number of positive elements in $C$; that is, if $C = \{c_0, c_1,\ldots, c_{k}\}$ then $k_C = k$. 

Assuming the blocks  in the random vector $\bX$ are independent, we can write the probability of 
observing $\bx=(x_1,\dots, x_m) \in A^m$ in the model with parameters $(C,\theta)$, $\theta=\{\theta_j\in \Theta\colon j\in 1:k_C\}$
by
\begin{equation}\label{prob}
\P_{(C,\theta)}(\bx) = \prod_{j=1}^{k_C}
f_{\theta_j}(x_{(c_{j-1}+1):c_j})\,,
\end{equation}
where $f_{\theta_j}(x_{(c_{j-1}+1):c_j})$ represents the distribution of a random vector over $A^{c_j-c_{j-1}}$ with parameter $\theta_j$. The independence assumption over the different blocks is
not a necessary condition for the method but the generalization to a non-independent setting is out of the scope of this work. We present below two well-known examples of families of distributions, one for continuous random variables and the other for discrete random variables, with the exact expression for the likelihood function \eqref{prob}.

\begin{example}
\label{example_categorical}
Let $A=\{a_1,\dotsc, a_d\}$ be a finite set and let
\[
\F=\Bigl\{ p\in [0,1]^d\colon p(a_i)\geq 0\text{ and } \sum_{i=1}^{d} p(a_i)=1\Bigr\}
\]
be the family of all  probability distributions over $A$. The likelihood function \eqref{prob} 
for $\bx\in A^m$ can be written, for $C\in\C$ and $\theta=(p_1,\dots,p_k)$ as 
\begin{equation*}
\P_{(C,\theta)}(\bx) = \prod_{j=1}^{k_C}\;\prod_{c = c_{j-1}+1}^{c_j} p_j(x_c)\,.
\end{equation*}
\end{example}

\begin{example}
\label{example_gaussian}
Consider the family 
\begin{equation}
    \mathcal{F} = \{f_{(\mu,\sigma^2)} \colon (\mu, \sigma^2)\,\in\,\mathbb{R}\times\mathbb{R}_{+} \} \label{family_normal}
\end{equation}
of probability densities over $A=\mathbb R$,  where $f_{(\mu, \sigma^2)}$ denotes the density of a Gaussian random variable  with mean $\mu$ and variance $\sigma^2$. Fixing the change point set $C$ with $k_C = k$ and the vector of parameters $\theta = ((\mu_j, \sigma_j^2))_{j=1}^{k}$, the likelihood function assuming the variables in $\bx$ are independent is given by
\begin{equation*}
    \mathbb{P}_{(C,\theta)}(\mathbf{x}) = \prod_{j=1}^{k_C}\;\prod_{c=c_{j-1}+1}^{c_j} (2\pi)^{-1/2}\sigma_j^{-1}\exp\bigl\{- \frac{1}{2} \Bigl( \frac{x_c - \mu_j}{\sigma_j}\Bigr)^2\bigr\} \,.
\end{equation*}
\end{example}

\subsection{Estimation and model selection}

Consider a sample of $n$ i.i.d. random vectors $\bx^n= \{\bx^{(i)}\}_{i=1}^n$ distributed as $\bX$, with 
change point set  $C^*=(c^*_0,\dots, c^*_{k^*})$ and parameters $\theta^*=(\theta^*_1,\dots, \theta^*_{k^*})$. Our main goal is to  
estimate the change point set $C^*$ and the parameters $\theta^*$. 


For any integer interval 
$I\subset 1:m$ assume we can compute the maximum likelihood estimator based on the  subsample $\{x^{(i)}_j\}_{i\in 1:n, j\in I}$. Write the maximum likelihood function for the set $C$ of candidate change points as \begin{equation}\label{likelihood}
L(C ;\bx^n) \;=\; \prod_{i = 1}^{n}\,
\P_{(C,\widehat\theta)}(\bx^{(i)})\;=\; \prod_{i = 1}^{n}\,
\prod_{j=1}^{k_C} f_{\widehat\theta_j}\left(x_{(c_{j-1}+1):c_j}^{(i)}\right) 
\end{equation}
where $\widehat\theta_j$ denotes the maximum likelihood estimator computed on the sample 
$\{x^{(i)}_c\}_{i\in 1:n, c\in I_j}$ with $I_j= (c_{j-1}+1):c_{j}$.
From \eqref{likelihood} the log-likelihood function  is given by 
\begin{equation}\label{log-likelihood}
l(C;\bx^n) = \sum_{i = 1}^{n}\sum_{j=1}^{|C|} 
\log f_{\widehat\theta_j}\bigl(x_{(c_{j-1}+1):c_j}^{(i)}\bigr) \,.
\end{equation}

Let $R\colon \mathcal{C} \to \mathbb{R}$ denote some regularization function and $J\colon \N \to \mathbb{R}$ an increasing function on the sample size $n$. We introduce the penalized likelihood estimator based on the functions $R$ and $J$ in the following definition. 
	
\begin{definition}
Given a sample $\bx^n$ and a constant $\lambda>0$, the Penalized Likelihood (PL) function for the set of change points $C$ is defined as
\begin{equation}\label{pl}
\pl(C;\bx^n) \;=\; -l(C; \bx^n) + \lambda R(C)J(n) \,.
\end{equation}
The PL estimator for the change point set is then defined as 
\begin{equation}\label{hatC}
\widehat{C}(\bx^n)\; =\; \argmin_{C \in \C} \pl(C;\bx^n)\,.
\end{equation}
\end{definition}

As we will show later in Theorems~\ref{PL_Consistency} and \ref{HS_Consistency}, in order to obtain the consistency of the change point estimator defined by \eqref{hatC} we need the functions $R$ and $J$ to satisfy some properties. This will be made precise in the statements of these theorems.

\section{Computation of the Estimator}\label{sec:computation}

In order to efficiently estimate the change point set, we suppose in this section that the regularization function $R$ is additive. That  means that there exists a function $\rho\colon \{1,\dots,m\}^2 \to \mathbb{R}$ such that for $C = \{c_0,\ldots,c_k\}$ we have 
\begin{equation}\label{Radd}
R(C) \;=\;  \sum_{j=1}^{k_C} \rho(c_{j-1}+1, c_j)\,.
\end{equation}

\subsection{Dynamic programming segmentation algorithm}

As presented in \citet{Jackson2005}, dynamic programming can be used to calculate exactly the 
PL estimator. Under an additive regularization, the function we want to minimize can be written as
\begin{align*}
 -l(C;\bx^n) + \lambda J(n)R(C) 
 &= \sum_{j=1}^{k_C} -\log f_{\widehat\theta_j}\Bigl(x_{(c_{j-1}+1):c_j}^{(i)}\Bigr) + \lambda J(n)\rho(c_{j-1} + 1, c_{j})\\ 
  &=  \sum_{j=1}^{k_C} Q((c_{j-1}+1):c_{j}) \,, 
 \end{align*}
where 
\begin{equation*}
Q((c_{j-1}+1):c_{j}) = -\log f_{\widehat\theta_j}\Bigl(x_{(c_{j-1}+1):c_j}^{(i)}\Bigr) + \lambda J(n)\rho(c_{j-1} + 1, c_{j}) \,.
\end{equation*}

The equation shows that we can completely decouple the loss from different blocks. Let $\mathcal{C}_i$ be the set of all ordered change point sets in $1:i$. Define
\begin{equation*}
F(i) = \min_{C \, \in \, \mathcal{C}_i} \Bigg\{\sum_{j=1}^{k}  Q((c_{j-1}+1):c_{j})\Bigg\}  
\end{equation*} 
as the optimal value for the segmentation up to variable $i$. The estimator $\widehat C(\bx^n)$ given by \eqref{hatC} is obtained by computing $F(m)$. But notice that
\begin{equation*}
F(i) = \underset{c \in (k-1):(i-1)}{\min}\{F(c) + Q((c+1):i)\} \,,
\end{equation*}
which establishes a recursion equation for the values of $F(i)$, $i$ varying from $1$ to $m$. The value of $F(1)$ can be computed trivially, and then we use the recursion to compute the values until we reach $F(m)$.

Albeit $m$ is fixed and only $n$ grows, the number of variables $m$ can be very large in some applications, so it is useful to express the complexity in terms of both. 
The dynamic programming segmentation algorithm runs on a time complexity of $O(m^2)$. However, we are assuming that $Q$ have been previously computed for all intervals in $1:m$. 

To compute $Q$, we need to compute the maximum likelihood estimators for each block. For most models, this can be done efficiently by computing the sufficient statistics, and then compute the entries of $Q$. For the models presented in Examples~\ref{example_categorical} and \ref{example_gaussian}, the time complexity for computing the sufficient statistics is $T(n, m) = O(nm)$, and then $O(m^2)$ to compute the entries of $Q$. In the case where no sufficient statistics exist, we need to reprocess the data every time, so the complexity to compute $Q$ is $O(nm^3)$.

Hence, the final time complexity of the algorithm is $O(T(n, m) + m^2)$. In the worst case the algorithm is $O(nm^3)$, and can be very slow for big values of $m$.

Depending on the function $\rho$ chosen, the PELT algorithm of \cite{Killick_2012} might be applicable. It consists of an adaptation of the dynamic programming algorithm discussed here. In some scenarios, such as when the number of change points is proportional to the number of variables, it runs in $O(m)$. The final complexity would be $O(nm + m)$ when suitable sufficient statistics exist.

\subsection{Hierarchical segmentation algorithm}

For efficient computation of $\widehat C(\bx^n)$ we can use an approximation to the optimum in \eqref{hatC}, known as  hierarchical segmentation or binary segmentation, first proposed in \cite{scott1974cluster}. Given an integer interval $I = r:s$, write $\mathbf{x}_I^n$ for the data with columns restricted to $I$ and assume the penalizing function $R$ is additive. Define the penalized loss of $I$ as

\begin{equation*}
\pl(I) \;=\; \pl(\{0,s-r+1\}; \mathbf{x}_I^n) \;=\; -l(\{0,s-r+1\}; \mathbf{x}_I^n) +\lambda\rho(r,s)J(n) \,,
\end{equation*}
with the appropriate renumbering of the columns in $\bx_I^n$. That is, the penalized loss corresponding to the interval $I$ is the penalized loss defined in \eqref{pl} when we only consider the data $\mathbf{x}_I^n$ and perform no splits. We use the convention $\pl(\emptyset) = 0$. For $c \, \in \, I=r:s$, define
\begin{equation}\label{hI}
h_I(c) \;=\; \pl(r:c) + \pl((c+1):s) \,.
\end{equation}
Observe that when $c=s$, by convention we have $(s+1):s = \emptyset$ so that $h_I(s)= \pl(I)$.

The hierarchical segmentation algorithm works recursively as follows. It begins with the set $\Chs(\bx^n)=\{0,m\}$ corresponding to the single interval $I=1:m$. In each iteration and for  each interval $I$ determined by $\Chs(\bx^n)$, the algorithm computes 
\[
\hat c \;=\; \underset{c \in I}{\argmin} \; h_I(c)
\]
and adds it to $\Chs(\bx^n)$. Observe that if in one iteration $\hat c=s$, as $s\in \Chs(\bx^n)$, no changes are made on $\Chs(\bx^n)$. The process continues until no more points can be added to 
$\Chs(\bx^n)$. 


To evaluate $\pl$ at each interval, we either store all possible values in the same fashion as for the dynamic programming algorithm or we evaluate them on the run. Since storing would make the algorithm $O(m^2)$ in any scenario, it is more interesting to pre-compute the sufficient statistics and evaluate the loss on the intervals as they appear.


In the worst-case scenario, the algorithm has time complexity of order $O(T(n, m) + m^2)$. However, the algorithm has asymptotic complexity of order  $O(T(n, m) + mk_{C^*}))$, as stated in the next proposition.

\begin{proposition}
    \label{prop: hier_complexity}
        The hierarchical segmentation algorithm  will asymptotically perform exactly $2k_{C^*} - 1$ recursive calls. Moreover, its asymptotic complexity is of order  $O(T(n, m) + mk_{C^*})$, where $T(n, m)$ is the time complexity to compute the sufficient statistics for each interval.
\end{proposition}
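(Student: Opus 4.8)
The plan is to model a single run of the hierarchical segmentation algorithm as a binary recursion tree, count its nodes using the consistency of the estimator, and then bound the cost incurred at each node.

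First I would set up the recursion-tree formalism. Each recursive call operates on an integer interval $I=r:s$ from the current partition and computes $\hat c=\argmin_{c\in I} h_I(c)$. If $\hat c=s$, no split occurs and the call is a \emph{leaf}; otherwise $\hat c\in r:(s-1)$, the interval is split into the two non-empty subintervals $r:\hat c$ and $(\hat c+1):s$, and the node has exactly two children. Hence the recursion tree is a \emph{full} binary tree—every internal node has exactly two children—and the number of recursive calls equals its number of nodes. The combinatorial identity I will use is that a full binary tree with $L$ leaves has $L-1$ internal nodes, hence $2L-1$ nodes in total.

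Second, I would invoke the consistency of the hierarchical estimator (Theorem~\ref{HS_Consistency}) to pin down the shape of this tree asymptotically. Consistency guarantees that, almost surely for all $n$ large enough, any interval $I$ that contains a true change point in its interior is split at a boundary of $C^*$, while any interval contained in a single true block (hence homogeneous) satisfies $\hat c=s$ and is not split. Consequently every split lands on a boundary of $C^*$, the recursion refines $1:m$ exactly into the $k_{C^*}$ true blocks, and the leaves of the tree are precisely these $k_{C^*}$ blocks. Taking $L=k_{C^*}$ in the identity above yields exactly $2k_{C^*}-1$ recursive calls, which is the meaning of ``asymptotically perform exactly.''

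Third, for the complexity I would bound the cost of a single call and sum over the tree. Assuming the sufficient statistics are precomputed once, at cost $T(n,m)$, each evaluation of $h_I(c)$ is $O(1)$, so computing $\hat c=\argmin_{c\in I} h_I(c)$ over an interval of length $|I|\le m$ costs $O(|I|)=O(m)$. Multiplying by the $2k_{C^*}-1=O(k_{C^*})$ calls gives $O(mk_{C^*})$, and adding the one-time statistics computation yields the claimed $O(T(n,m)+mk_{C^*})$.

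The main obstacle is the second step: making rigorous that, asymptotically, the tree has exactly $k_{C^*}$ leaves. This requires simultaneously that no true change point is missed (every multi-block interval is split at an interior true boundary) and that no spurious split occurs inside a homogeneous block. Both are exactly what Theorem~\ref{HS_Consistency} supplies, but care is needed to apply it uniformly over all the data-dependent intervals that can arise during the recursion and to pass from an almost-sure statement at each node to the event that the entire tree attains the correct shape for all $n$ large.
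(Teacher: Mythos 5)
Your proof is correct and follows essentially the same route as the paper: both arguments rest on the asymptotic correctness of the splits (so that the recursion refines $1:m$ exactly into the $k_{C^*}$ true blocks) and then count calls, the paper via an explicit induction on $k_{C^*}$ and you via the equivalent full-binary-tree identity that $L$ leaves give $2L-1$ nodes. The complexity bound ($O(m)$ work per call after precomputing sufficient statistics) is identical to the paper's.
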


The proof of this proposition is given in Appendix~\ref{proof: hier_complexity}.    
    
\section{Consistency of the Change Point Estimators}\label{sec:consistency}

In this section we state the theoretical results  
that guarantee the consistency of the estimator \eqref{hatC} computed exactly by the dynamic programming algorithm or approximated by the hierarchical segmentation algorithm. For each method, we state a different set of assumptions that must be satisfied by the family of probability distributions considered in the model. 
Then we prove that standard families of distributions such as Gaussian and categorical probability distributions satisfy these assumptions. 
    
\begin{assumption}\label{ass_PL}
Suppose there exists a function $l^*\colon \C\to\R$ such that

\begin{enumerate}
\item[(PL1)] For any $C\in\C$ we have that 
\[
\frac1n l(C;\bx^n) \;\to \; l^*(C)
\]
almost surely as $n\to\infty$, where $l$ is the log-likelihood function defined in \eqref{log-likelihood}. Moreover, assume there exists $\alpha>0$ such that 
\[
\inf_{C\not\supseteq C^*} l^*(C) \;\leq\; l^*(C^*) - \alpha\; <\; l^*(C^*)\,.
\]
\item[(PL2)] There exists a sequence $\{v(n)\}_{n\in\N}$ 
satisfying $v(n) \to \infty$ and $v(n)/n \to 0$ when $n\to\infty$, and such that for any $C\supseteq C^*$ we have that 
\begin{equation*}
 \big|l(C; \bx^n) - l(C^*;\bx^n)\big| \;<\; v(n)
\end{equation*}
eventually almost surely as $n\to\infty$. 
\end{enumerate}
\end{assumption}

Observe that (PL2) implies that $l^*(C)=l^*(C^*)$ for all $C\supseteq C^*$. \\
    
We now state the consistency result of the PL estimator
given in \eqref{hatC}.
    
    
\begin{theorem}
\label{PL_Consistency}
Suppose that the family $\mathcal F$ of probability distributions satisfy Assumption~\ref{ass_PL}.
Let $R$ be a penalizing function such that 
 $R(C) > R(C')$ whenever $C\supset C'$ and let $J(n)$ be such that  $J(n)/v(n) \to\infty$ and $J(n)/n \to 0$ when $n\rightarrow \infty$. Then the estimator of the change point set given by \eqref{hatC} satisfies $\widehat{C}(\bx^n)=C^*$ eventually almost surely as $n\to\infty$.
\end{theorem}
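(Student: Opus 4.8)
The plan is to show that, eventually almost surely, $\pl(C^*;\bx^n) < \pl(C;\bx^n)$ simultaneously for every $C \in \C$ with $C \neq C^*$. Since $m$ is fixed, $\C$ is a finite set, so it suffices to establish this strict inequality separately for each competitor $C$ and then intersect the finitely many probability-one events, taking the maximum of the corresponding (random) thresholds; on the resulting probability-one event $C^*$ is the unique minimizer, i.e.\ $\widehat{C}(\bx^n)=C^*$. The competitors split into two regimes, each controlled by one part of Assumption~\ref{ass_PL}: the \emph{under-segmented} sets $C \not\supseteq C^*$, which miss at least one true change point, and the \emph{over-segmented} sets $C \supsetneq C^*$, which keep all true change points but add spurious ones.

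For the under-segmented case I would divide by $n$ and write
\begin{equation*}
\frac1n\bigl[\pl(C;\bx^n) - \pl(C^*;\bx^n)\bigr] = \frac1n l(C^*;\bx^n) - \frac1n l(C;\bx^n) + \lambda\,\frac{J(n)}{n}\,\bigl[R(C)-R(C^*)\bigr].
\end{equation*}
By (PL1) the first two terms converge almost surely to $l^*(C^*) - l^*(C) \geq \alpha > 0$, while the penalty term vanishes because $J(n)/n \to 0$ and $R(C)-R(C^*)$ is a fixed constant of arbitrary sign. Hence the left-hand side is eventually bounded below by $\alpha/2$, so that $\pl(C;\bx^n) - \pl(C^*;\bx^n) > n\alpha/2 \to \infty$, and in particular it is eventually positive. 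Here the genuine likelihood gap, of order $n$, is what detects the missing change point, and the only demand on the penalty is that it grow sublinearly so as not to mask this gap.

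For the over-segmented case the likelihood no longer separates the two models at scale $n$, so the penalty must do the work. Writing the difference directly,
\begin{equation*}
\pl(C;\bx^n) - \pl(C^*;\bx^n) = -\bigl[l(C;\bx^n) - l(C^*;\bx^n)\bigr] + \lambda J(n)\bigl[R(C)-R(C^*)\bigr],
\end{equation*}
I would control the bracketed likelihood term using (PL2), which gives $|l(C;\bx^n)-l(C^*;\bx^n)| < v(n)$ eventually almost surely, and bound the penalty from below using monotonicity of $R$: since $C \supsetneq C^*$ we have $R(C) - R(C^*) \geq \delta_0 := \min_{C' \supsetneq C^*}[R(C')-R(C^*)] > 0$. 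Combining,
\begin{equation*}
\pl(C;\bx^n) - \pl(C^*;\bx^n) > \lambda\delta_0 J(n) - v(n) = J(n)\Bigl[\lambda\delta_0 - \tfrac{v(n)}{J(n)}\Bigr],
\end{equation*}
and because $J(n)/v(n)\to\infty$ the bracket is eventually larger than $\lambda\delta_0/2>0$, while $J(n)\to\infty$ follows from $v(n)\to\infty$ together with $J(n)/v(n)\to\infty$; thus the difference again diverges to $+\infty$ and is eventually positive.

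The conceptual core, and the step I expect to require the most care, is the over-segmented regime, where the two growth conditions on $J(n)$ must be played against one another: $J(n)$ has to outgrow the almost-sure fluctuation scale $v(n)$ of the log-likelihood so that adding a false change point is strictly penalized, yet it must remain $o(n)$ so as not to overwhelm the order-$n$ likelihood signal exploited in the under-segmented regime. Once both comparisons are in hand, the finiteness of $\C$ is what lets me upgrade the pointwise ``eventually almost surely'' statements into a single almost-sure threshold beyond which $C^*$ is the unique minimizer of $\pl(\,\cdot\,;\bx^n)$.
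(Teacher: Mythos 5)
Your proposal is correct and follows essentially the same route as the paper's own proof: under-segmented competitors $C\not\supseteq C^*$ are ruled out via (PL1) and the sublinearity of $J(n)$, over-segmented competitors $C\supset C^*$ via (PL2) together with the strict monotonicity of $R$ and $J(n)/v(n)\to\infty$, and the finiteness of $\C$ closes the argument. The only cosmetic differences are that you make the uniform penalty gap $\delta_0$ and the implication $J(n)\to\infty$ explicit, which the paper leaves implicit.
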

    
Notice that the regularization function $R$ does not need to be additive to guarantee the consistency of the PL estimator. However, this is a desirable property to efficiently compute the estimator by using the dynamic programming segmentation algorithm. 

In order to prove that the estimator given by the  hierarchical segmentation algorithm is also consistent, we need a slightly different set of hypotheses considering the local nature of this algorithm. 
Denote by $\mathcal I$ the set of all intervals $I\subset 1:m$. Given $I\in\mathcal I$, denote by $\widehat{\theta}_I$  the maximum likelihood estimator of the parameter $\theta$ on the interval $I$ and as before, let $\bx_I^n$ be the data restricted to the interval  $I$. Define the maximum log-likelihood function for the interval $I$ as
\begin{equation*}
l(I;\bx_I^n) = \sum_{i=1}^n\log f_{\widehat{\theta}_I}(x_{I}^{(i)}) \,.
\end{equation*}
Let $h_I$ be the loss function for the interval $I$ as defined in \eqref{hI}.

\begin{assumption} \label{ass_HS}  
There exists a function $l^*\colon\mathcal I\to\mathbb R$ such that
\begin{enumerate}
\item[(H1)] For any integer interval $I\in\mathcal I$ we have that 
\begin{equation}\label{lstar}
\frac1n l(I;\bx_I^n) \;\to \; l^*(I)
\end{equation}
almost surely as $n\to\infty$. If $I=r:s$,  defining $h_I^*:I\to \mathbb{R}$ as
\begin{equation}\label{hstar}
h^*_I(u) \;=\; -l^*(r:u) - l^*((u+1):s)\,,\quad  u\,\in\,I\,,
\end{equation}
we have that  
\begin{equation}
 \min_{c \in I\setminus\{s\}\cap C^*} \; h_I^*(c)\;<\; \min_{c \not\in I\setminus\{s\}\cap C^*} \; h_I^*(c) 
\label{hyp_strict_max}
\end{equation}
\item[(H2)] There exists a sequence $\{v(n)\}_{n\in\N}$ satisfying $v(n) \to \infty$ and $v(n)/n \to 0$ when $n\to\infty$, and such that, for any integer interval $I=r:s$ satisfying $I\setminus\{s\} \cap C^* = \emptyset$ we have
\begin{equation*}
\max_{u\in I} \;\big|l(I; \bx_I^n) - l(r:u;\bx_I^n) - l((u+1):s;\bx_I^n)\big| \;<\; v(n)
\end{equation*}
eventually almost surely as $n\to\infty$.
\end{enumerate}
\end{assumption}

We can now state the consistency of the change point set estimator given by the hierarchical segmentation algorithm.
    
\begin{theorem}
\label{HS_Consistency}
For any $\lambda>0$, let $\Chs(\bx^n)$ be the estimator computed by the hierarchical segmentation algorithm. Suppose that the family $\mathcal F$ of probability distributions satisfy Assumption~\ref{ass_HS}.
Suppose that $R$ satisfies \eqref{Radd} for some function
 $\rho\colon \mathcal I \to \mathbb{R}$ and that 
$\rho(I) < \rho(I_1)+\rho(I_2)$ whenever $I=I_1\cup I_2$, with $I_1,I_2\neq \emptyset$. 
Finally, assume that the function $J(n)$ satisfies $J(n)/v(n) \to\infty$ and $J(n)/n \to 0$ when $n\rightarrow \infty$. Then, $\Chs(\bx^n) = C^*$ eventually almost surely as $n\to\infty$.
\end{theorem}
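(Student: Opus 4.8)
The plan is to reduce the global behaviour of the recursive algorithm to a local, per-interval analysis and then patch the pieces together. Since $m$ is fixed, the set $\mathcal I$ of integer sub-intervals of $1:m$ is finite, so it suffices to control the single split $\hat c = \argmin_{c\in I} h_I(c)$ on each $I\in\mathcal I$ in isolation and afterwards intersect finitely many ``eventually almost surely'' events. For a fixed $I=r:s$ I would split into two cases according to whether $(I\setminus\{s\})\cap C^*$ is empty or not, corresponding respectively to ``$I$ should not be cut'' and ``$I$ should be cut at a true change point.''

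\emph{Case 1 (correct splits).} Suppose $(I\setminus\{s\})\cap C^*\neq\emptyset$. Writing $h_I(c)=-l(r:c;\bx_I^n)-l((c+1):s;\bx_I^n)+\lambda J(n)[\rho(r,c)+\rho(c+1,s)]$, dividing by $n$ and using $J(n)/n\to 0$ together with (H1) gives $\tfrac1n h_I(c)\to h_I^*(c)$ for every $c\in I$. As $I$ is a finite index set this pointwise convergence is automatically uniform, and the strict separation \eqref{hyp_strict_max} says the minimum of $h_I^*$ over $I$ is attained only inside $(I\setminus\{s\})\cap C^*$ and is strictly below its value at every other point, in particular below $h_I^*(s)$. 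Hence for all large $n$ the finite-sample minimiser $\hat c$ lands on a true change point and differs from $s$, so the algorithm performs a split at a point of $C^*$.

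\emph{Case 2 (no over-segmentation).} Suppose $(I\setminus\{s\})\cap C^*=\emptyset$. Here I compare each candidate $c<s$ with the no-split value $h_I(s)=\pl(I)$. The difference $h_I(c)-h_I(s)$ decomposes as the likelihood term $l(I;\bx_I^n)-l(r:c;\bx_I^n)-l((c+1):s;\bx_I^n)$ plus the penalty increment $\lambda J(n)\big[\rho(r,c)+\rho(c+1,s)-\rho(r,s)\big]$. By (H2) the likelihood term exceeds $-v(n)$, while strict subadditivity of $\rho$ makes the penalty bracket a strictly positive number $\delta_{I,c}$. Setting $\delta=\min_{I,c}\delta_{I,c}>0$ (a finite minimum of positive numbers) yields $h_I(c)-h_I(s)>-v(n)+\lambda\delta J(n)$, which is eventually positive for all $c<s$ simultaneously because $J(n)/v(n)\to\infty$. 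Thus $\hat c=s$ and no point is added.

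Finally I would assemble the two cases. Intersecting over the finitely many intervals in $\mathcal I$ produces an almost sure event on which, for all large $n$, every interval with an interior true change point is split at a point of $C^*$ and every interval without one is left untouched. On this event a short induction on the recursion shows the algorithm only ever inserts elements of $C^*$, giving $\Chs(\bx^n)\subseteq C^*$; and it halts on a block $r:s$ only when $\{r,\dots,s-1\}\cap C^*=\emptyset$, which forces every true change point onto a block boundary and hence $C^*\subseteq\Chs(\bx^n)$. Together these give $\Chs(\bx^n)=C^*$. The main obstacle is Case 2: the whole argument rests on beating the stochastic fluctuation $v(n)$ of the local log-likelihood gain by the deterministic penalty growth, which is possible only because strict subadditivity of $\rho$ supplies a uniform gap $\delta>0$ and because $J(n)/v(n)\to\infty$; relaxing either hypothesis would permit spurious splits.
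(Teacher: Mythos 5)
Your proposal is correct and follows essentially the same route as the paper's proof: the same two-case local analysis (use (H2) plus strict subadditivity of $\rho$ and $J(n)/v(n)\to\infty$ to rule out spurious splits on change-point-free intervals; use (H1) and the strict separation \eqref{hyp_strict_max}, with the penalty killed by $J(n)/n\to 0$, to force splits onto true change points), followed by an inductive assembly over the finitely many intervals. The only cosmetic difference is that the paper organizes the final step as induction on $m$ rather than on the recursion tree.
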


It turns out that verifying directly the inequality \eqref{hyp_strict_max} on Assumption~\ref{ass_HS} is usually hard. We now state a lemma that provides easier to verify conditions that imply the desired inequality.

\begin{lemma}\label{conditions_ineq_HS}
Given an integer interval $I$ 
assume that 
\begin{enumerate}
\item[(a)] There exists a point $c\,\in\,I$ such that $ h_I^*(c) < h_I^*(s)$. 
\item[(b)] The function $h_I^*$ restricted to the intervals $[c^*_{j-1},c^*_j]$, $j=1,\dotsc, k^*$,  is concave. 
\item[(c)] If $h_I^*$ is constant in $[c_{j-1}^*, c_{j}^*]$, then it is equals to $h_I^*(s)$ on this interval. 
\end{enumerate}    
Then, the function $h^*_I$ satisfies that  
\begin{equation*}
 \min_{c \in I\setminus\{s\}\cap C^*} \; h_I^*(c)\;<\; \min_{c \not\in I\setminus\{s\}\cap C^*} \; h_I^*(c) \,. 
\end{equation*}
\end{lemma}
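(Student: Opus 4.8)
The plan is to pin down the global minimizer of $h_I^*$ on $I=r:s$ and show it must be a true interior change point, strictly below the value of $h_I^*$ at every other index. First I would cut $I$ at its interior change points into the sub-intervals $[c^*_{j-1},c^*_j]\cap I$, each contained in a single block of $C^*$; by hypothesis (b) the function $h_I^*$ is concave on each of them. I would then record the elementary fact that a non-constant concave function on a discrete interval attains its minimum only at the endpoints of that interval, while a constant one equals $h_I^*(s)$ by (c). Hence $\min_{u\in I}h_I^*(u)$ is attained at a breakpoint of this partition, i.e. at an interior change point or at one of the two ends $r,s$.

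Next I would dispose of the two ends. I would first establish the subadditivity $l^*(I_1\cup I_2)\le l^*(I_1)+l^*(I_2)$ of the limit in \eqref{lstar}, which follows from the coordinatewise factorization of $f_\theta$ together with the sub-optimality of the pooled maximum-likelihood estimator. Plugged into \eqref{hstar}, subadditivity gives $h_I^*(u)\le h_I^*(s)$ for every $u$, so $s$ is a global maximum and (a) upgrades this to $\min_{u}h_I^*(u)<h_I^*(s)$; in particular the minimum is not at $s$. Subadditivity also forces $h_I^*$ to be non-increasing on the first sub-interval $[r,c^*_{j_0}]$ up to the first interior change point $c^*_{j_0}$, so its minimum there is $c^*_{j_0}$ rather than $r$; a symmetric computation handles the last sub-interval and the end $s$.

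With the ends controlled I would assemble the strict inequality. Constant sub-intervals sit at level $h_I^*(s)$ by (c), hence above $\min_u h_I^*$ by (a), so they carry no minimizer; on each non-constant sub-interval the minimum is attained only at an endpoint, and the endpoint analysis shows these minimizing endpoints are interior change points. Thus $\min_u h_I^*$ is attained at some $c\in(I\setminus\{s\})\cap C^*$, and it remains to check strict domination of every index outside this set: an interior non-change-point lies in the interior of its sub-interval and is strictly above the piece minimum by non-constant concavity (or equals $h_I^*(s)$ on a constant piece); the end $s$ is strictly above by (a); and for the end $r$, were $h_I^*(r)$ equal to the global minimum, monotonicity on $[r,c^*_{j_0}]$ and (c) would force $h_I^*\equiv h_I^*(s)$, contradicting (a). This is exactly \eqref{hyp_strict_max}.

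The hard part will be the left end $r$: since block boundaries are the right ends $c^*_j$, the point $r$ is typically not a change point, and concavity (b) alone permits $h_I^*$ to decrease into $r$ and place the minimum there — indeed one can write down abstract functions meeting (a)--(c) whose minimum sits at such an $r$, violating the conclusion. What rescues the statement is that $h_I^*$ comes from a genuine limiting log-likelihood: the subadditivity above makes $h_I^*$ non-increasing toward the first interior change point, and combined with (a) and (c) this both locates the minimum at a true change point and yields the strict separation required in \eqref{hyp_strict_max}.
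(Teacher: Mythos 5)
Your core strategy is the same as the paper's: locate the global minimizer of $h_I^*$, use the fact that a concave function on a (discrete) interval attaining its minimum at an interior point must be constant, invoke (c) to rule out constant pieces, and invoke (a) to rule out $s$. The paper's own proof is exactly this, in about four lines: if the minimizer $u$ is not in $C^*$ it is interior to its block, concavity forces constancy there, and (c) forces the value $h_I^*(s)$, contradicting (a). Where you genuinely diverge is at the left end $r$: you correctly observe that when $r\notin C^*$ the point $r$ is an endpoint of the restricted domain $[r,c^*_{j_0}]$ rather than an interior point of its block, so the concavity-implies-constancy step does not apply to it, and hypotheses (a)--(c) alone do not exclude a strict minimum at $r$. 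This is a real boundary case that the paper's proof passes over in silence, and your abstract counterexample shows the lemma, read purely as a statement about functions satisfying (a)--(c), needs this extra care. Your repair is sound in substance but imprecise in attribution: subadditivity of $l^*$ by itself only gives $h_I^*(u)\le h_I^*(s)$; the monotonicity of $h_I^*$ on $[r,c^*_{j_0}]$ additionally requires that $l^*$ be \emph{additive} within a homogeneous block (equivalently, the convex-combination structure $\theta_{r:u}=t(u)\theta_{r:c^*_{j_0-1}}+(1-t(u))\theta^*_{j_0}$ together with Jensen's inequality), which is precisely the structure supplied by Lemma~\ref{conditions_iid_HS}. In effect your proof imports part of Lemma~\ref{conditions_iid_HS} into Lemma~\ref{conditions_ineq_HS}; that trades the clean separation between the abstract lemma and its verification for a complete treatment of the endpoint, which is a reasonable price given that the abstract version is otherwise incomplete at $r$.
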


The following lemma shows that under general conditions on the function $h^*_I$ verified in practice by many models, the conditions of Lemma~\ref{conditions_ineq_HS} hold.

\begin{lemma}\label{conditions_iid_HS}
Let $\Theta$ be a convex subset of $\mathbb R^d$ for some $d\in\N$. For any interval $I\in\mathcal I$ define the parameter $\theta_{I} = \sum_{k=1}^{k^*} \frac{|I \cap I_k^*|}{|I|} \theta_r^*$. Assume the function $h^*_I$ defined by \eqref{hstar}, for $I=r:s$,  is of the form $h^*_I(u) = (u-r+1)\psi(\theta_{r:u}) + (s-u)\psi(\theta_{(u+1):s}) + \alpha$,  with $\psi$ a strictly convex function having second order derivatives in the interior of $\Theta$ and $\alpha$ a constant not depending on $u$. Then $h^*_I$ satisfies the conditions (a)-(c) in Lemma~\ref{conditions_ineq_HS}.
\end{lemma}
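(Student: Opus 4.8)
The plan is to analyze $h^*_I$ one true block at a time, exploiting that, restricted to a block $I^*_k=(c^*_{k-1}+1):c^*_k$, the function $h^*_I$ is (up to the additive constant $\alpha$) a \emph{perspective transform} of $\psi$ evaluated along an affine path in $u$. Fix $I=r:s$ and write $n_{r:u}=u-r+1$ and $S_{r:u}=\sum_{k}|(r:u)\cap I^*_k|\,\theta^*_k$, so that $\theta_{r:u}=S_{r:u}/n_{r:u}$; define $\tilde n_{u}=s-u$ and $\tilde S_u$ analogously for $(u+1):s$. The first thing I would record is the elementary identity
\[
n_{r:c}\,\theta_{r:c}+\tilde n_c\,\theta_{(c+1):s}=\textstyle\sum_{k}|I\cap I^*_k|\,\theta^*_k=|I|\,\theta_I,
\]
i.e.\ $\theta_I$ is the $(n_{r:c},\tilde n_c)$-weighted average of $\theta_{r:c}$ and $\theta_{(c+1):s}$ for every $c\in I$. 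This single identity drives both (a) and (c).

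For (a) I would take $c$ to be a true change point in the interior of $I$ (this is the only case in which the inequality \eqref{hyp_strict_max} is non-vacuous; when $I$ meets a single block, $h^*_I$ is constant and (a) is subsumed by (c)). Since the blocks on the two sides of $c$ carry distinct parameters, $\theta_{r:c}\neq\theta_{(c+1):s}$, and combining the weighted-average identity with the \emph{strict} curvature of $\psi$ yields, by Jensen's inequality,
\[
|I|\,\psi(\theta_I)\;>\;n_{r:c}\,\psi(\theta_{r:c})+\tilde n_c\,\psi(\theta_{(c+1):s}),
\]
which is exactly $h^*_I(s)>h^*_I(c)$ after adding $\alpha$ and using $h^*_I(s)=|I|\psi(\theta_I)+\alpha$.

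For (b) and (c) I would pass to the continuous interpolant of $u\mapsto h^*_I(u)$ obtained by interpolating $S_{r:u},\tilde S_u$ linearly; within block $j$ each unit increment of $u$ merely transfers one index of parameter $\theta^*_j$ from the right piece to the left one, so $S_{r:u},n_{r:u}$ (and likewise $\tilde S_u,\tilde n_u$) are \emph{affine} in $u$ there. A direct differentiation of the perspective term gives the clean expression
\[
\frac{d^2}{du^2}\Big[n_{r:u}\,\psi(\theta_{r:u})\Big]=\frac{1}{n_{r:u}}\,(\theta^*_j-\theta_{r:u})^{\top}\,\nabla^2\psi(\theta_{r:u})\,(\theta^*_j-\theta_{r:u}),
\]
with the analogous formula for the right piece. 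Since the sign of each quadratic form is fixed by the definiteness of $\nabla^2\psi$, both contributions share one sign and $h^*_I$ is concave on each block; concavity of the interpolant at integer points yields the discrete concavity claimed in (b). For (c) I would use the \emph{strict} definiteness of $\nabla^2\psi$: the formula above vanishes on a nondegenerate subinterval only if $\theta_{r:u}=\theta^*_j$ and $\theta_{(u+1):s}=\theta^*_j$ throughout; as $\theta_{r:u}$ is a strict convex combination of the distinct $\theta^*_k$ over every block meeting $r:u$, this forces $r:u$, and symmetrically $(u+1):s$, to meet block $j$ only, i.e.\ $I\subseteq I^*_j$. Then $\theta_{r:u}=\theta_{(u+1):s}=\theta_I=\theta^*_j$ for all $u$, so $h^*_I(u)=|I|\psi(\theta^*_j)+\alpha=h^*_I(s)$ identically, establishing (c). With (a)--(c) verified, Lemma~\ref{conditions_ineq_HS} delivers \eqref{hyp_strict_max}.

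The main obstacle I anticipate is the bookkeeping in (b)--(c): transferring rigorously from the continuous perspective picture to the discrete second differences the algorithm actually uses, and pinning down the degenerate equality case in (c) without hidden assumptions on how $I$ straddles the change points. I would therefore isolate the weighted-average identity and the quadratic-form expression for the second derivative as the two computational lemmas, and then handle the boundary indices $u\in\{c^*_{j-1},c^*_j\}$ by a short separate check.
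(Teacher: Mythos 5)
Your computational core is the same as the paper's: the weighted-average identity $ (c-r+1)\theta_{r:c}+(s-c)\theta_{(c+1):s}=|I|\theta_I$, and the second-derivative formula $\frac{d^2}{du^2}\bigl[(u-r+1)\psi(\theta_{r:u})\bigr]=\frac{1}{u-r+1}(\theta_j^*-\theta_{r:u})^{\top}\nabla^2\psi(\theta_{r:u})(\theta_j^*-\theta_{r:u})$, which coincides exactly with the paper's computation of $g''$. But the signs in your argument do not cohere. Your own formula shows each perspective term is \emph{convex} on each true block (the quadratic form is nonnegative), so with $h^*_I(u)=(u-r+1)\psi(\theta_{r:u})+(s-u)\psi(\theta_{(u+1):s})+\alpha$ and $\psi$ strictly convex, $h^*_I$ is convex, not concave, on each block, and Jensen gives $|I|\psi(\theta_I)\le (c-r+1)\psi(\theta_{r:c})+(s-c)\psi(\theta_{(c+1):s})$, i.e.\ $h^*_I(s)\le h^*_I(c)$ --- the opposite of what (a) and (b) require. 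The inequality you write for (a) is Jensen in the wrong direction for a convex function. The resolution is that, by \eqref{hstar} and $l^*(I)=|I|\psi(\theta_I)+\mathrm{const}$, the function actually carries minus signs, $h^*_I(u)=\alpha-(u-r+1)\psi(\theta_{r:u})-(s-u)\psi(\theta_{(u+1):s})$; this is the form the paper's proof silently works with and the form that obtains in the categorical and Gaussian examples. You must flip to that form, after which your Jensen step and your concavity conclusion become correct as stated.

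Second, your proof of (a) has a gap even after the sign is fixed: for an \emph{arbitrary} interior change point $c$ you assert $\theta_{r:c}\neq\theta_{(c+1):s}$ ``since the blocks on the two sides of $c$ carry distinct parameters''. But these are averages over possibly many blocks and can coincide even though adjacent block parameters differ: with four equal-length blocks of parameters $0,2,0,2$ and $c$ the middle change point, both averages equal $1$ and strict Jensen fails at that $c$. Condition (a) only needs one good $c$, and one always exists (if every change point had equal left and right averages they would all equal $\theta_I$, and comparing the first two change points would force two consecutive block parameters to coincide), but that argument must be supplied. The paper avoids this entirely: it notes that $h^*_I(s)$ is always a global maximum of $h^*_I$ (super-additivity of the maximized likelihood), so failure of (a) would make $h^*_I$ constant on all of $I$, and the constancy analysis of (c) then forces all block parameters to be equal, a contradiction. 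Finally, in (c) your claim that $\theta_{r:u}\equiv\theta_j^*$ forces $I\subseteq I_j^*$ is also false (an average of several distinct parameters can equal $\theta_j^*$), but this detour is unnecessary: $\theta_{r:u}=\theta_{(u+1):s}=\theta_j^*$ already gives $\theta_I=\theta_j^*$ and hence $h^*_I(u)=h^*_I(s)$, which is how the paper concludes.
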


We now state two results that guarantee that both families of probability distributions given in Examples~ \ref{example_categorical} and \ref{example_gaussian} satisfies Assumptions~\ref{ass_PL} and \ref{ass_HS}. Hence, both the dynamic programming and hierarchical segmentation algorithms provide  consistent estimators of the change point parameters $(C^*,\theta^*)$.   

\begin{proposition}\label{prop_categorical}
The family of discrete categorical distributions, as presented in Example~\ref{example_categorical}, satisfies Assumptions~\ref{ass_PL} and \ref{ass_HS}.
\end{proposition}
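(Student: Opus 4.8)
The plan is to compute the limiting log-likelihood $l^*$ explicitly for the categorical family and then verify the four conditions (PL1), (PL2), (H1), (H2) in turn, delegating the delicate inequality \eqref{hyp_strict_max} to Lemmas~\ref{conditions_ineq_HS} and \ref{conditions_iid_HS}. First I would record the maximum likelihood estimator: on any interval $I=r:s$ the block likelihood is maximized by the pooled empirical frequency $\widehat p_I(a)=\frac{1}{n|I|}\sum_{i=1}^n\sum_{c\in I}\mathbbm 1[x_c^{(i)}=a]$, and a direct count of symbols gives $\frac1n l(I;\bx_I^n)=-|I|\,H(\widehat p_I)$, where $H(p)=-\sum_{a\in A}p(a)\log p(a)$ is the Shannon entropy. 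Writing $p^*_c$ for the true marginal of $X_c$ and $\bar p_I=\frac{1}{|I|}\sum_{c\in I}p^*_c$ for its average over $I$, the strong law of large numbers gives $\widehat p_I\to\bar p_I$ almost surely, and continuity of $H$ yields the limit $l^*(I)=-|I|\,H(\bar p_I)$ required in \eqref{lstar}; summing over the blocks of a candidate $C$ produces the corresponding limit in (PL1). Symbols with $p^*_c(a)=0$ cause no harm, since they almost surely never appear and one may restrict each entropy to the support of the relevant distribution.

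Next I would establish the separation in (PL1). If $C\not\supseteq C^*$ then some block $I_j$ of $C$ straddles a true change point, so the marginals $\{p^*_c\}_{c\in I_j}$ are not all equal and $\bar p_{I_j}$ is a nondegenerate mixture of distinct laws; strict concavity of $H$ then forces $|I_j|\,H(\bar p_{I_j})>\sum_{c\in I_j}H(p^*_c)$, whereas on $C^*$ every block is homogeneous and the two sides coincide. Summing over blocks gives $l^*(C)<l^*(C^*)$ whenever $C\not\supseteq C^*$, and since $\mathcal C$ is finite ($m$ being fixed) the infimum is attained, so one may take $\alpha=l^*(C^*)-\max_{C\not\supseteq C^*}l^*(C)>0$. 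For the inequality \eqref{hyp_strict_max} in (H1) I would not argue directly but invoke Lemma~\ref{conditions_iid_HS}: the simplex is a convex subset of $\mathbb R^{d}$, the mixture $\theta_I$ of that lemma coincides with $\bar p_I$, and from the formula for $l^*$ one reads off that $h^*_I(u)=(u-r+1)\,H(\theta_{r:u})+(s-u)\,H(\theta_{(u+1):s})$ has exactly the form required there, with $\psi=H$ and $\alpha=0$. Here $H$ is twice continuously differentiable and strictly concave on the interior of the simplex, which is precisely the curvature rendering $h^*_I$ concave on each true block; Lemma~\ref{conditions_iid_HS} then delivers conditions (a)--(c), and Lemma~\ref{conditions_ineq_HS} yields \eqref{hyp_strict_max}.

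The remaining, and hardest, step is the fluctuation control in (PL2) and (H2). For $C\supseteq C^*$, or for an interval $I=r:s$ whose interior contains no true change point (so that $(I\setminus\{s\})\cap C^*=\emptyset$ forces all of $x_r,\dots,x_s$ into one true block), every block involved is homogeneous with a common true law $p$, and the quantities to bound are log-likelihood differences between a pooled fit and a refined fit of the same homogeneous data. I would Taylor expand $H$ about $p$: because the three relevant empirical frequencies are convex combinations of one another, the constant and first-order terms telescope to zero exactly, leaving only a quadratic form whose coefficient is the block length. The law of the iterated logarithm gives $\widehat p-p=O\bigl(\sqrt{\log\log n/n}\bigr)$ almost surely, so each such difference is $O(\log\log n)$ almost surely, and taking the maximum over the finitely many candidate sets $C$ and split points $u$ preserves this rate. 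Hence (PL2) and (H2) hold with any sequence satisfying $v(n)/\log\log n\to\infty$ and $v(n)/n\to0$, for instance $v(n)=\log n$.

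I expect this last step to be the main obstacle, since it is the only place that requires an almost-sure \emph{rate} rather than a mere limit, and the only place where the nonsmoothness of $H$ at the boundary of the simplex must be handled; the cancellation of the constant and linear terms in the Taylor expansion is what makes the rate come out as $O(\log\log n)$ rather than $O(\sqrt{n\log\log n})$, and verifying this cancellation carefully is the crux. The first two steps, by contrast, reduce to the strong law, continuity and strict concavity of the entropy, and the finiteness of $\mathcal C$, while the inequality \eqref{hyp_strict_max} is essentially free once the entropy has been identified with the function $\psi$ of Lemma~\ref{conditions_iid_HS}.
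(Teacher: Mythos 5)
Your proposal is correct and follows essentially the same route as the paper: the law of large numbers identifies $l^*$ as a weighted negative entropy of the averaged block parameters, the gap in (PL1) comes from strict concavity of the entropy under nondegenerate mixing (the paper phrases the identical fact as strict positivity of a Kullback--Leibler divergence), (H1) is delegated to Lemmas~\ref{conditions_iid_HS} and~\ref{conditions_ineq_HS}, and (PL2)/(H2) follow from a second-order expansion on homogeneous blocks plus the law of the iterated logarithm, which is exactly the paper's chi-squared upper bound on the KL divergence followed by the LIL, with $v(n)=\delta\log n$. The one bookkeeping slip is that Lemma~\ref{conditions_iid_HS} requires a strictly \emph{convex} $\psi$, so you should take $\psi(p)=\sum_{a\in A}p(a)\log p(a)=-H(p)$ (as the paper does) rather than $\psi=H$; your substantive claim---that the concavity of $H$ is what renders $h^*_I$ concave on each true block---is nevertheless the correct one.
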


\begin{proposition}\label{prop_gaussian}
The family of Gaussian densities, as presented in Example~\ref{example_gaussian}, with unknown mean and constant known variance for all 2blocks satisfies Assumption~\ref{ass_PL} and \ref{ass_HS}.
\end{proposition}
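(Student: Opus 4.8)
\emph{Proof proposal.} The plan is to verify the four conditions (PL1), (PL2), (H1), (H2) for the Gaussian family of Example~\ref{example_gaussian} with common known variance $\sigma^2$. Here the only free parameter on an interval $I$ is the mean, and its maximum likelihood estimator is the pooled sample mean $\widehat{\mu}_I=\frac{1}{n|I|}\sum_{i=1}^n\sum_{c\in I}x_c^{(i)}$. I write $\mu_c^*$ for the piecewise constant true mean of $X_c$ and $\bar{\mu}_I=\frac1{|I|}\sum_{c\in I}\mu_c^*$ for the limiting pooled mean on $I$. First I would identify $l^*$: by the strong law of large numbers $\widehat{\mu}_I\to\bar{\mu}_I$ and $\frac1n\sum_i(x_c^{(i)}-\widehat{\mu}_I)^2\to\sigma^2+(\mu_c^*-\bar{\mu}_I)^2$ almost surely, so a direct computation gives
\[
l^*(I)=-\tfrac{|I|}{2}\log(2\pi\sigma^2)-\tfrac{|I|}{2}-\frac{1}{2\sigma^2}\sum_{c\in I}(\mu_c^*-\bar{\mu}_I)^2 ,
\]
and summing over blocks yields the convergence in (PL1) and in \eqref{lstar}. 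The heterogeneity term vanishes exactly when $I$ lies inside a single true block. Hence $l^*(C)=l^*(C^*)$ whenever $C\supseteq C^*$, whereas any $C\not\supseteq C^*$ has a block straddling some $c_j^*$ and loses a strictly positive amount; since $0:m$ is finite, taking $\alpha$ to be the minimum of these finitely many positive losses gives the gap required in the second part of (PL1).

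For (PL2) and (H2) I would use the analysis-of-variance identity: on a homogeneous interval, the log-likelihood gained by splitting at $u$ is
\[
l(r:u;\bx_I^n)+l((u+1):s;\bx_I^n)-l(I;\bx_I^n)\;=\;\frac{1}{2\sigma^2}\,\frac{n|J||J'|}{|I|}\bigl(\widehat{\mu}_J-\widehat{\mu}_{J'}\bigr)^2,
\]
with $J=r:u$ and $J'=(u+1):s$. When $I$ has no interior true change point, both pooled means converge to the common true mean, and the key point is that $n(\widehat{\mu}_J-\widehat{\mu}_{J'})^2=O(\log\log n)$ almost surely by the law of the iterated logarithm. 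Maximizing over the finitely many split points and summing over the finitely many blocks keeps the bound at order $\log\log n$, so the choice $v(n)=\log n$ (which satisfies $v(n)\to\infty$ and $v(n)/n\to0$) makes both $|l(C;\bx^n)-l(C^*;\bx^n)|<v(n)$ for $C\supseteq C^*$ and the display in (H2) hold eventually almost surely.

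It remains to verify the separation inequality \eqref{hyp_strict_max} of (H1), for which I would invoke Lemma~\ref{conditions_iid_HS}. Substituting the expression for $l^*$ into \eqref{hstar} and observing that $\sum_{c\in I}(\mu_c^*)^2$ and the terms linear in $|I|$ are constant in $u$, I obtain
\[
h_I^*(u)\;=\;\alpha-\frac{1}{2\sigma^2}\Bigl[(u-r+1)\,\bar{\mu}_{r:u}^2+(s-u)\,\bar{\mu}_{(u+1):s}^2\Bigr],
\]
which is of the form required by Lemma~\ref{conditions_iid_HS} with $\Theta=\R$ and a quadratic $\psi(\mu)=\mu^2/(2\sigma^2)$. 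Since each term is a perspective map of the strictly convex $y\mapsto y^2$ and $(\,\bar{\mu}_J|J|,|J|)$ traces a straight line as $u$ moves inside a single true block, $h_I^*$ is a constant minus a convex function there, hence concave, giving condition (b); condition (a) holds on any interval containing an interior change point because splitting at that point strictly increases the between-group sum of squares relative to $u=s$, and condition (c) holds because $h_I^*$ is constant precisely on fully homogeneous stretches, where it equals $h_I^*(s)$. Lemma~\ref{conditions_iid_HS} (through Lemma~\ref{conditions_ineq_HS}) then yields \eqref{hyp_strict_max}, completing (H1).

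The step I expect to be the main obstacle is the pair (PL2)/(H2). The quantity to be controlled does not vanish after multiplication by $n$ — it oscillates and is $O_p(1)$ per split — so a law-of-large-numbers or central-limit argument only gives control \emph{in probability}, whereas the \emph{eventually almost surely} conclusion forces the use of the law of the iterated logarithm to pin the almost sure fluctuations at order $\log\log n$. Exhibiting a single admissible rate $v(n)$ that simultaneously dominates these fluctuations for all the finitely many relevant intervals and refinements, while still being $o(n)$, is what ties the whole argument together.
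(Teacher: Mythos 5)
Your proposal is correct and follows essentially the same route as the paper's supplementary proof: the same limit $l^*$ obtained from the law of large numbers and the variance decomposition, (PL1) from the strict convexity of $x\mapsto x^2$, (PL2)/(H2) from the law of the iterated logarithm pinning the almost-sure fluctuations at order $\log\log n$, and (H1) by casting $h_I^*$ in the form of Lemma~\ref{conditions_iid_HS} with $\psi(\mu)\propto\mu^2$. The only cosmetic differences are that you bound the likelihood gain of a refinement through the single-split ANOVA identity plus telescoping where the paper bounds it directly via terms of the form $(\widehat\mu_{I_j}-\mu_r^*)^2$, and that your sign convention for $h_I^*$ (a constant minus the convex terms) matches the proof of Lemma~\ref{conditions_iid_HS} and the categorical application rather than the lemma's literal statement.
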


The proof of Proposition~\ref{prop_categorical} is presented in Appendix~\ref{appendix:proof_of_consistency} and the proof of Proposition~\ref{prop_gaussian} is given in the Supplementary Material to this article, jointly with results of simulations implemented for this specific family of distributions. 

\section{Bootstrap Methods to Assess Variability}\label{sec:bootstrap}

A criticism usually made to change point detection methods is that they often do not provide measures of variability or confidence sets for the estimated change points set.  Here, we propose the usage of bootstrap methods to tackle this problem.


As mentioned in the Introduction, in our setting, each index usually has a specific interpretation, such as being a genetic marker or a particular position on a river. Hence, a researcher might be interested in the statistical significance of a given point detected as a change point. To know how likely an index $c$ in $1:(m - 1)$ is detected as a change point on a model with parameters $(C^*, \theta^*)$  we  can compute
\begin{equation*}
p(c) = \mathbb{P}_{(C^*,\theta^*)}\bigl(\, c \, \in \, \widehat{C} \,\bigr) \,.\quad 
\end{equation*}
Since our scenario assumes the rows of $\bx^n$ correspond to i.i.d samples of the vector $\bX$, bootstrap resampling is straightforward. Resampling $B$ data sets $\bx^{n,1},\dots, \bx^{n,B}$ with the same dimension of $\bx^n$ by using the bootstrap on the rows of $\bx^n$ and computing the change point set $\widehat{C}(\bx^{n,b})$ for $b=1,\dots,B$, the bootstrap estimate for the quantity above is
\begin{equation*}
\hat{p}(c) \;=\; \frac{1}{B}\sum_{b=1}^B  \mathbbm{1}_{\widehat{C}(\bx^{n,b})}(c)\,,
\end{equation*}
where $\mathbbm{1}_{\widehat{C}(\bx^{n,b})}(c)$  denotes the indicator function that $c\in \widehat{C}(\bx^{n,b})$. 
In practice, however, it might be difficult for the researcher to pinpoint the exact location of the change point beforehand. His practical experience most likely indicates that a change occurs within an region. Then,
we can address the question of how significant it is for the interval $I$ to  contain a change point  by computing

\begin{equation*}
p(I)\;=\;  \mathbb{P}_{(C^*,\theta^*)}\bigl( \,I \cap \widehat{C} \neq \emptyset\,  \bigr) \,.
\end{equation*}
This quantity can be estimated by
\begin{equation*}
\hat{p}(I) = \frac{1}{B}\sum_{b=1}^B  \mathbbm{1}_{\{\, I\cap \widehat{C}(\bx^{n,b}) \,\neq\, \emptyset \,\}}\,.
\end{equation*}


To measure the variability of the whole estimated change point set, we could consider a metric $d$ defined on sets in $\mathcal C$. Our interest would be to calculate
\begin{align*}
\mathbb{E}_{(C^*,\theta^*)} \bigl[d(C^*,\widehat{C})\bigr] \quad \mbox{ and } \quad
\mbox{Var}_{(C^*,\theta^*)}\bigl[d(C^*,\widehat{C})\bigr] \,.
\end{align*}
By replacing $C^*$ by $\widehat C(\bx^n)$, the bootstrap estimates of these quantities are 
\[
\bar d \;=\;\frac{1}{B}\sum_{b=1}^Bd(\widehat{C}(\bx^n),\widehat{C}(\bx^{n,b}))
\quad \mbox{and} \quad \widehat{\text{Var}}(d)\;=\; \frac{1}{B}\sum_{b=1}^B\bigl[\, d(\widehat{C}(\bx^n),\widehat{C}(\bx^{n,b})) - \bar d\,\bigr]^2 \,,
\]
respectively. 
    
\section{Simulations}

In this section we show the results of a simulation study  to compare the convergence of both  algorithms as sample size increases. We consider two families of distributions: the Bernoulli distribution and the Gaussian distribution with both mean and variance unknown (results available as Supplementary Material).  
We fixed the number of variables $m$ and varied the number of  change points in the model.


    
\begin{figure}
\centering
\includegraphics[width = 0.49\textwidth]{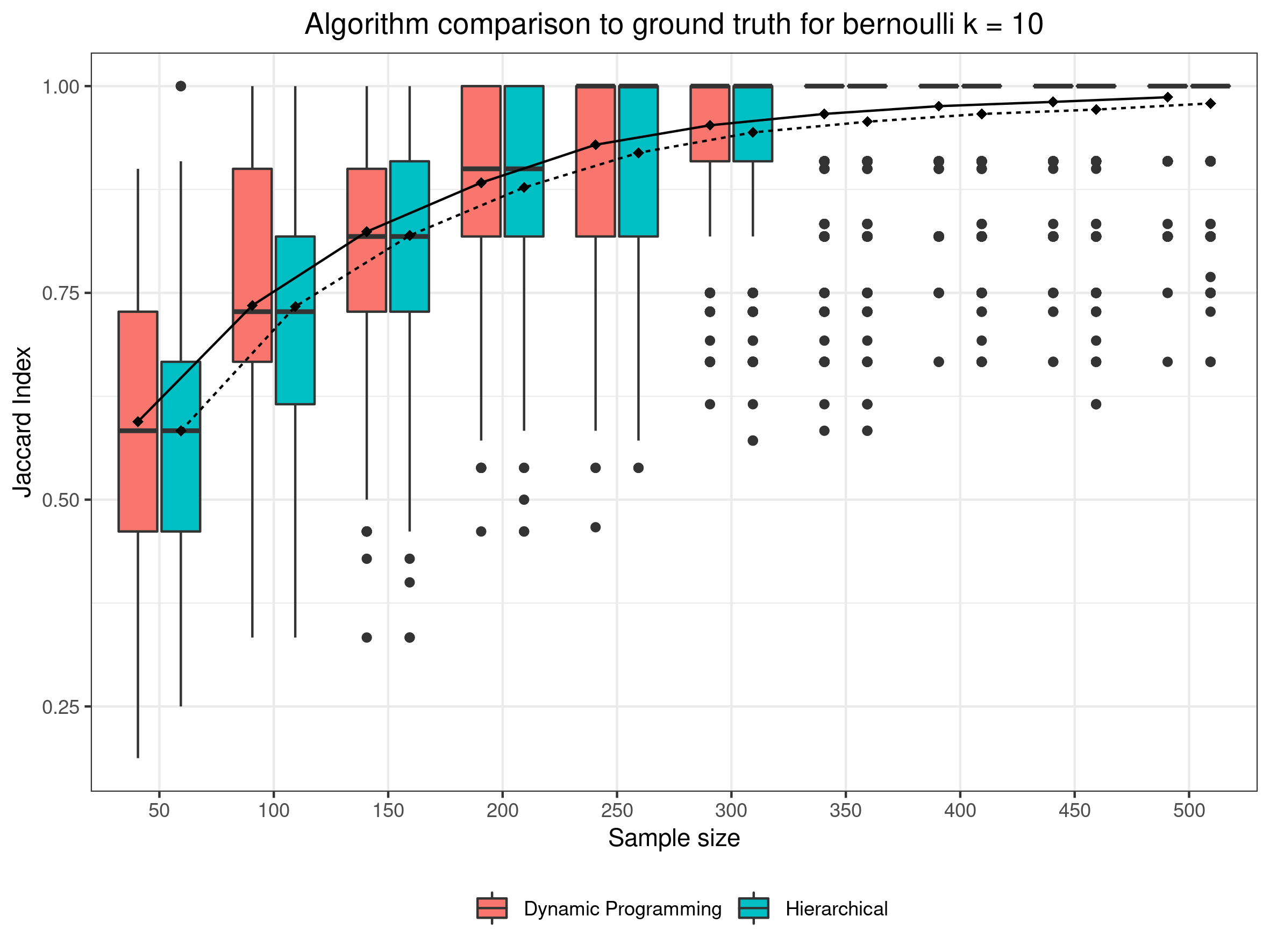}
\includegraphics[width = 0.49\textwidth]{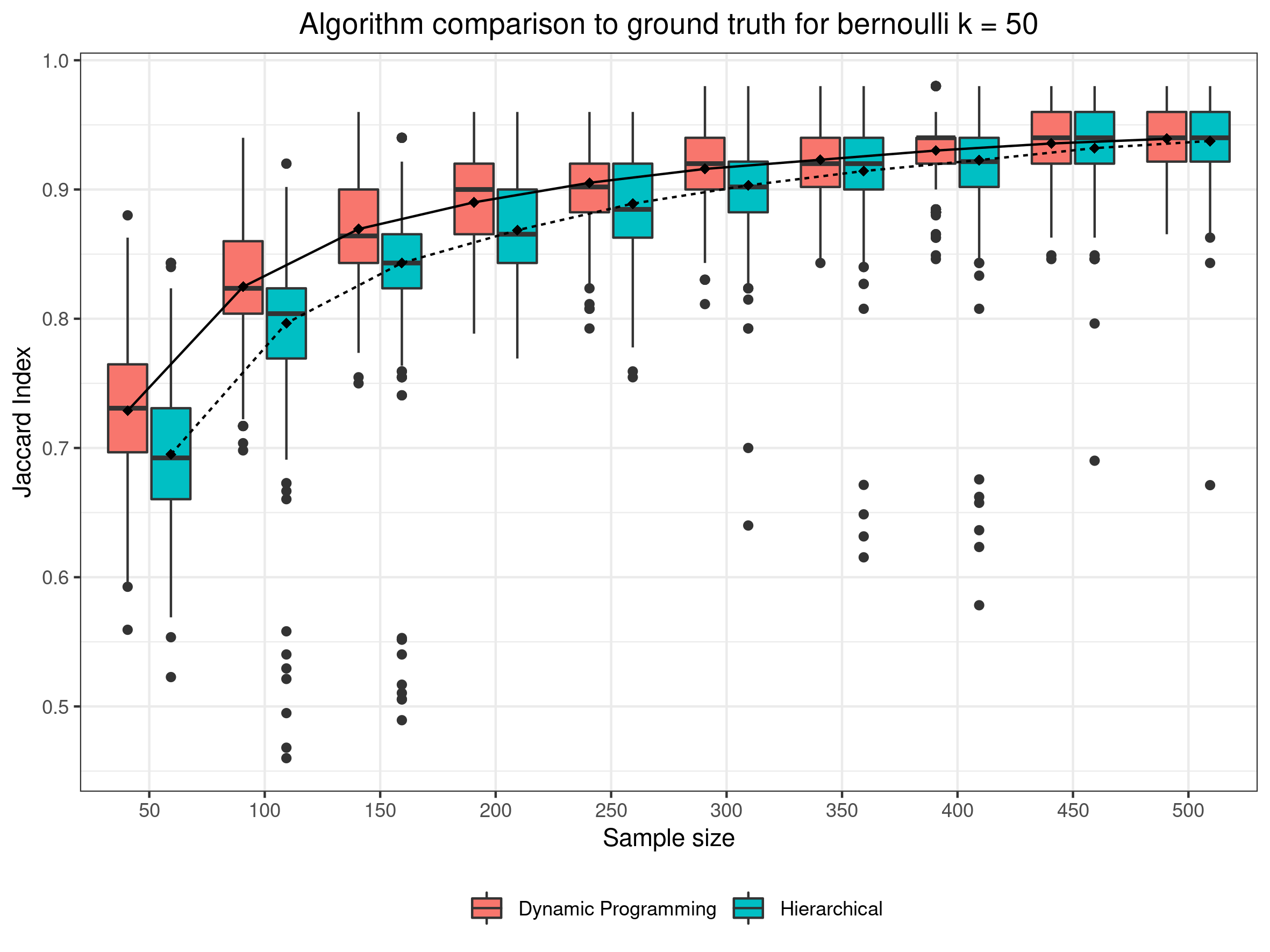}\\[3mm]
 \includegraphics[width = 0.49\textwidth]{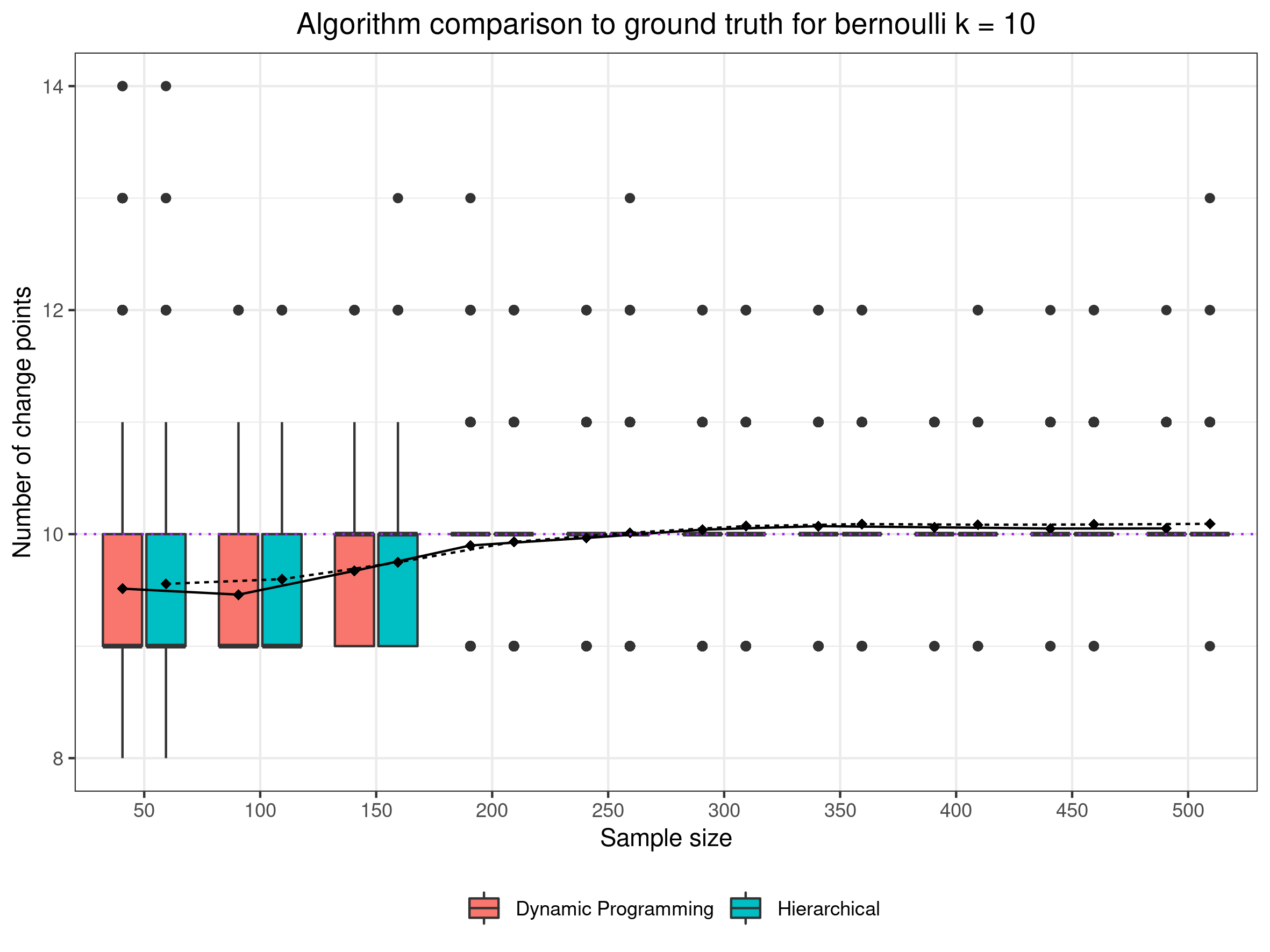}
  \includegraphics[width = 0.49\textwidth]{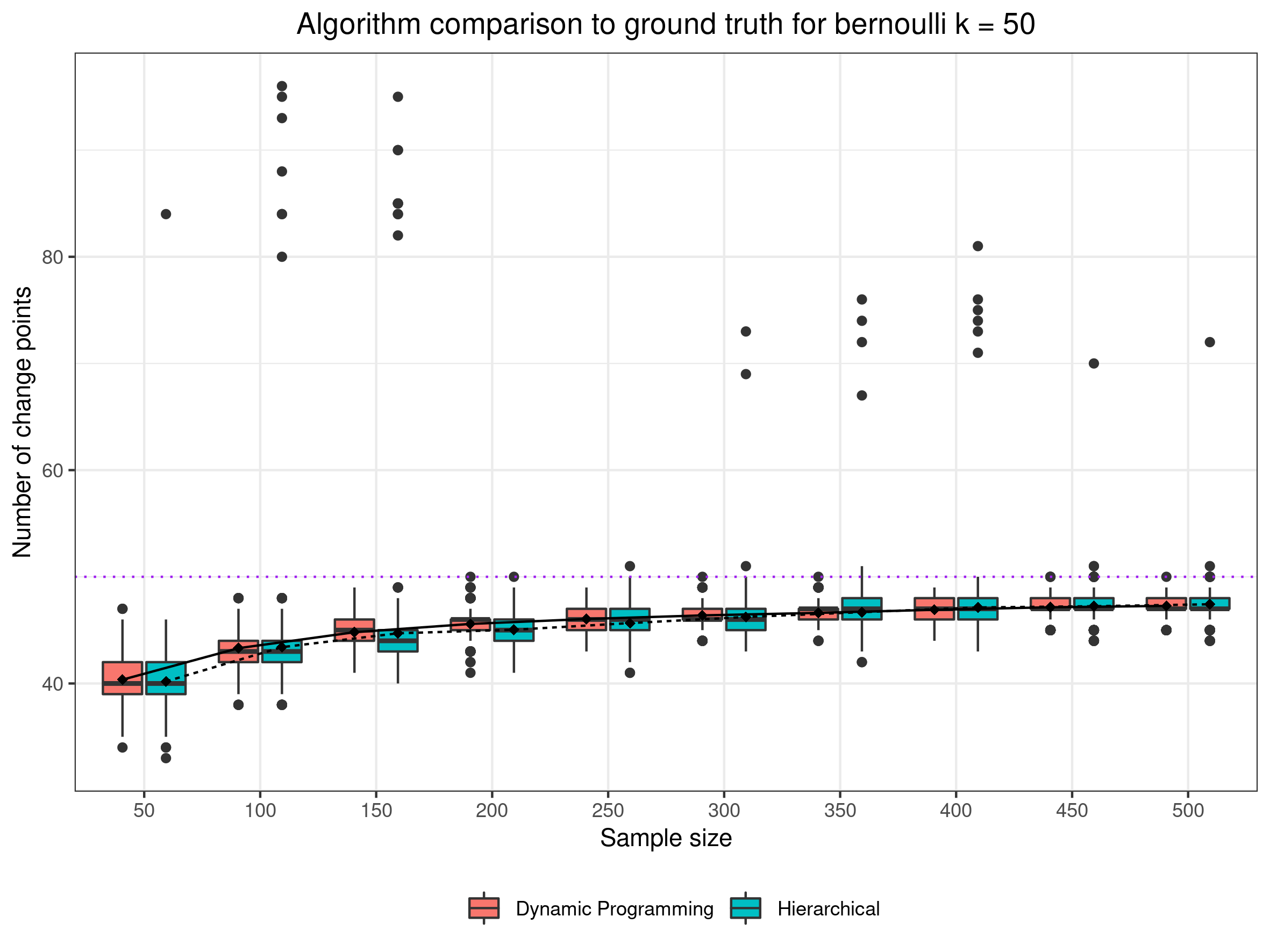}
   \caption{Comparison of estimated change point sets between algorithms and against ground truth for the Bernoulli
    family using the Jaccard Index (top) and the number of change-points (bottom). The true number of change points $k^*$ is indicated on top of each figure and marked on the purple dotted horizontal line.}
    \label{fig:metric_bernoulli}
\end{figure}

For both distributions, we varied the number of samples $n$ from $50$ to $500$ in steps of $50$. For each sample size, we simulated $1000$ data sets with $n$ samples. We fixed the number of variables as $m = 200$ and the number of change points  $|C^*| = k^*$ to take values in $\{10, 20, 50\}$.
For each $k^*$, the change points were sampled without replacement from a uniform distribution in $[1, 199]$ and where maintained fixed for all data sets and all sample sizes. 
Bernoulli parameters were sampled independently from a uniform distribution. For the Gaussian distribution, means were sampled independently from a $N(0, \sqrt{5})$, and variances from an $\mbox{Exp}(1)$. These parameters also remained fixed for all data sets and all sample sizes.
The penalization constant $\lambda$ was selected from the set $\{0.1, 1, 10\}$. The models were fitted for each $\lambda$ in the set, and then BIC was used to choose the final model. The penalization functions were set to $J(n) = \log(n)$ and $\rho(r:s) = 1$.

In order to evaluate the convergence of the change point set estimated by both, the dynamic programming and hierarchical segmentation algorithms, we considered the Jaccard Index, defined by 
\begin{equation*}
J(C_1, C_2) = \frac{|C_1 \cap C_2|}{|C_1 \cup C_2|} \,.
\end{equation*}
We also took into account the convergence of the estimated number of 
change points to $k^*$. It allow us to see when the model is heavily underestimating or overestimating the number of change points.

Figure~\ref{fig:metric_bernoulli} shows that both algorithms are converging to the true change point set as sample sizes grow. The boxplots on the top are becoming increasingly narrower and closer to $1$, which indicates total similarity between the change point sets. For $k^* = 10$, most values are indeed $1$, as indicated by the collapsed boxplots from sample sizes greater or equal to $350$. We also observe that the performance of the algorithms are very similar as sample sizes grow, since the boxplots shapes become similar. The hierarchical algorithm exhibits more variability, since the boxplots are wider and have more outliers. The graphics on the bottom row of the figure, 
that display boxplots and average lines for the estimated number of change points, help us understand also the  behavior of the algorithms as sample size increases.

For the proposed bootstrap procedure of Section~\ref{sec:bootstrap}, we evaluated how likely each index was estimated as a change point  by the hierarchical algorithm on the $k^* = 10$ scenario  (Figure \ref{fig:bootstrap_singledataset_detection}). 

\begin{figure}
\centering
\includegraphics[width = 0.8\textwidth]{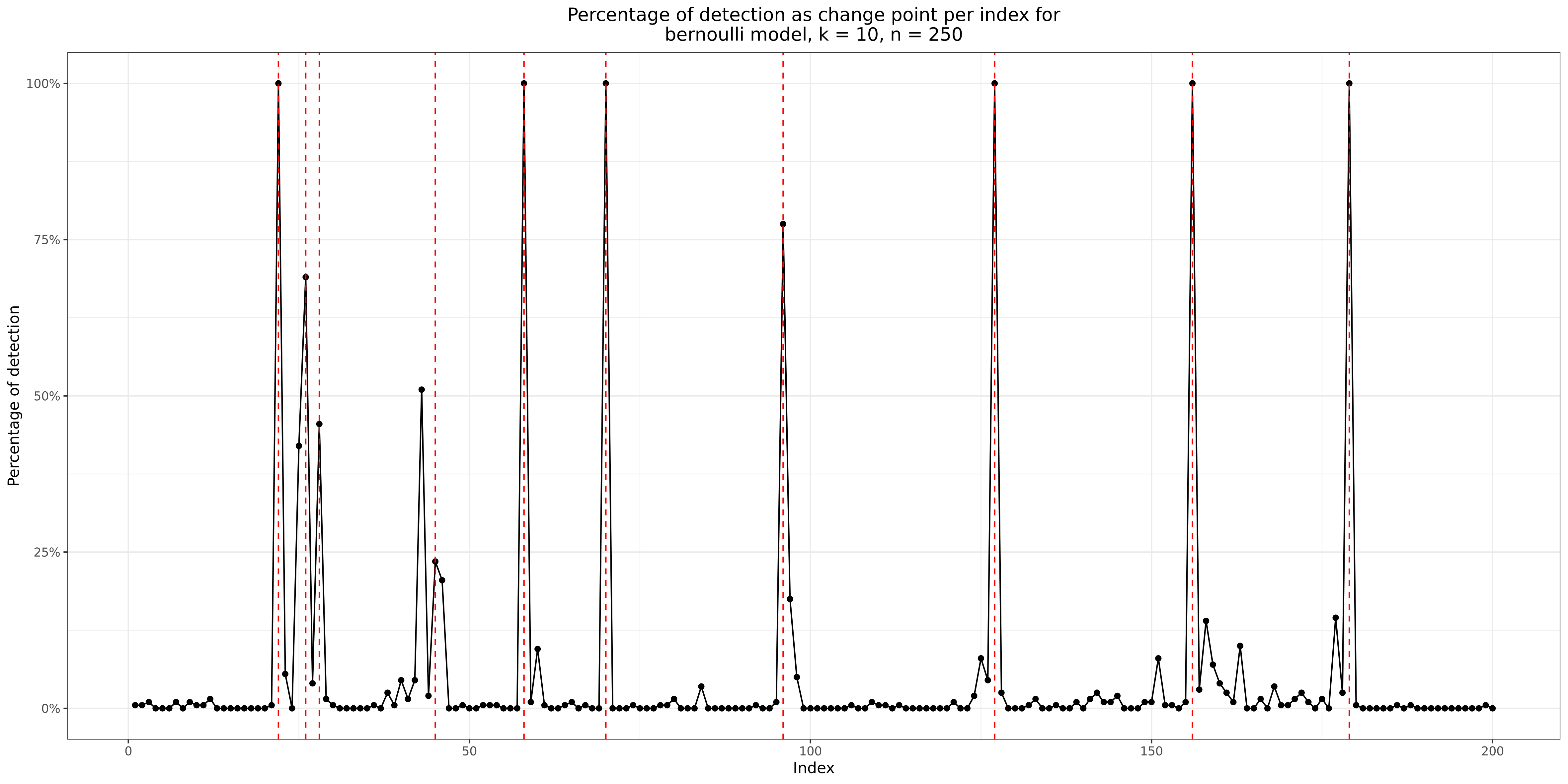}
\caption{Percentage of detection by  the hierarchical segmentation algorithm of each index as a change point. A random data set from the scenario $k^* = 10$, $n = 250$ and Bernoulli family was selected, and $200$ bootstrap samples were obtained. Black lines show how frequently the particular indexes were estimated as change points. Vertical dashed red lines indicate the position of the true change points on the simulated model.}
\label{fig:bootstrap_singledataset_detection}
\end{figure}

Intuitively, we expect the change point detection rate to be higher and with less variability if the blocks it divides have very discrepant parameters and are larger. For the Bernoulli model, we can asses how the detection rate varies with the absolute difference between consecutive parameters on each block, as observed in Figure~\ref{fig:bootstrap_avg_detection_true_cp}.

\begin{figure}
\centering
\includegraphics[width = 0.49\textwidth]{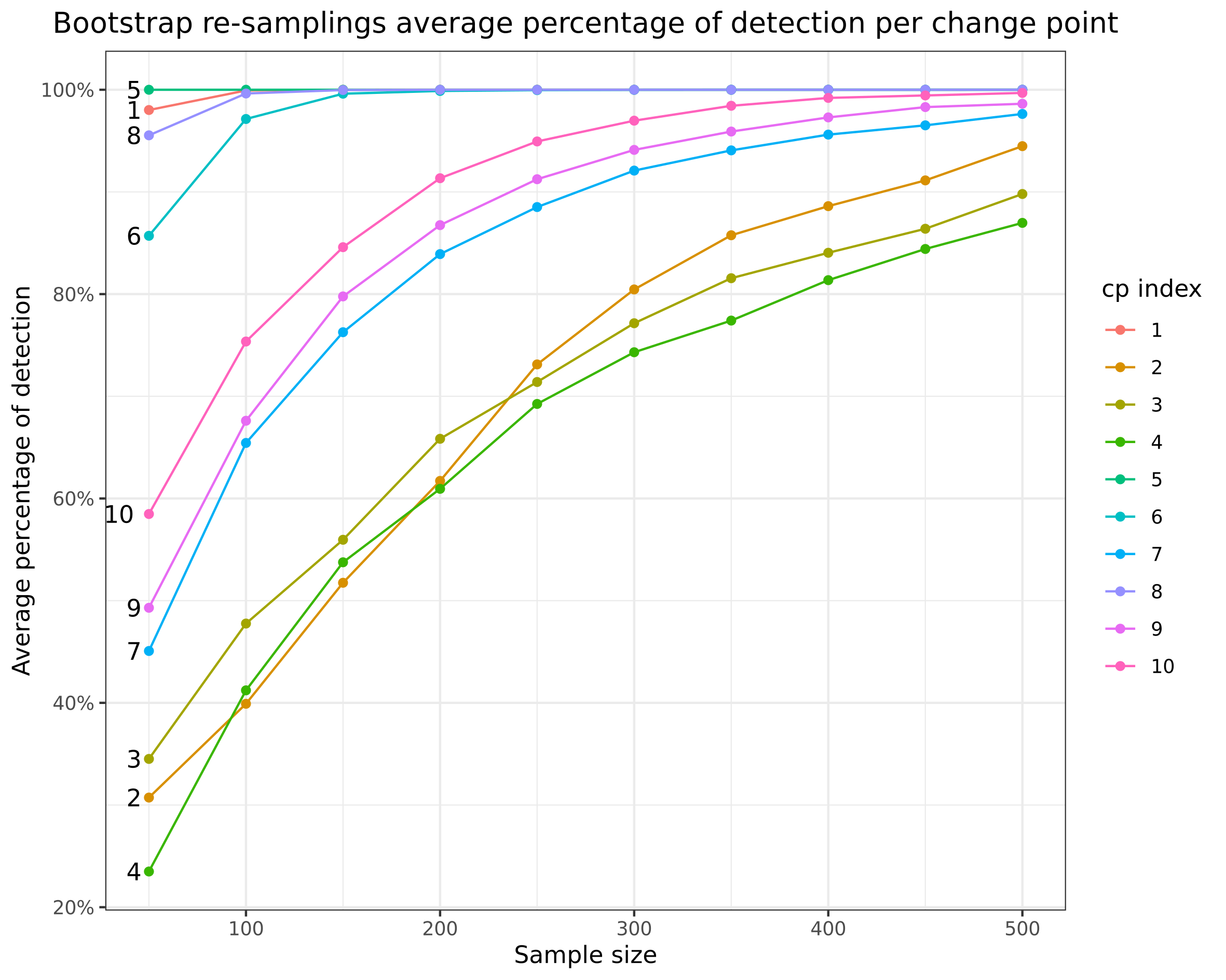}
\includegraphics[width = 0.49\textwidth]{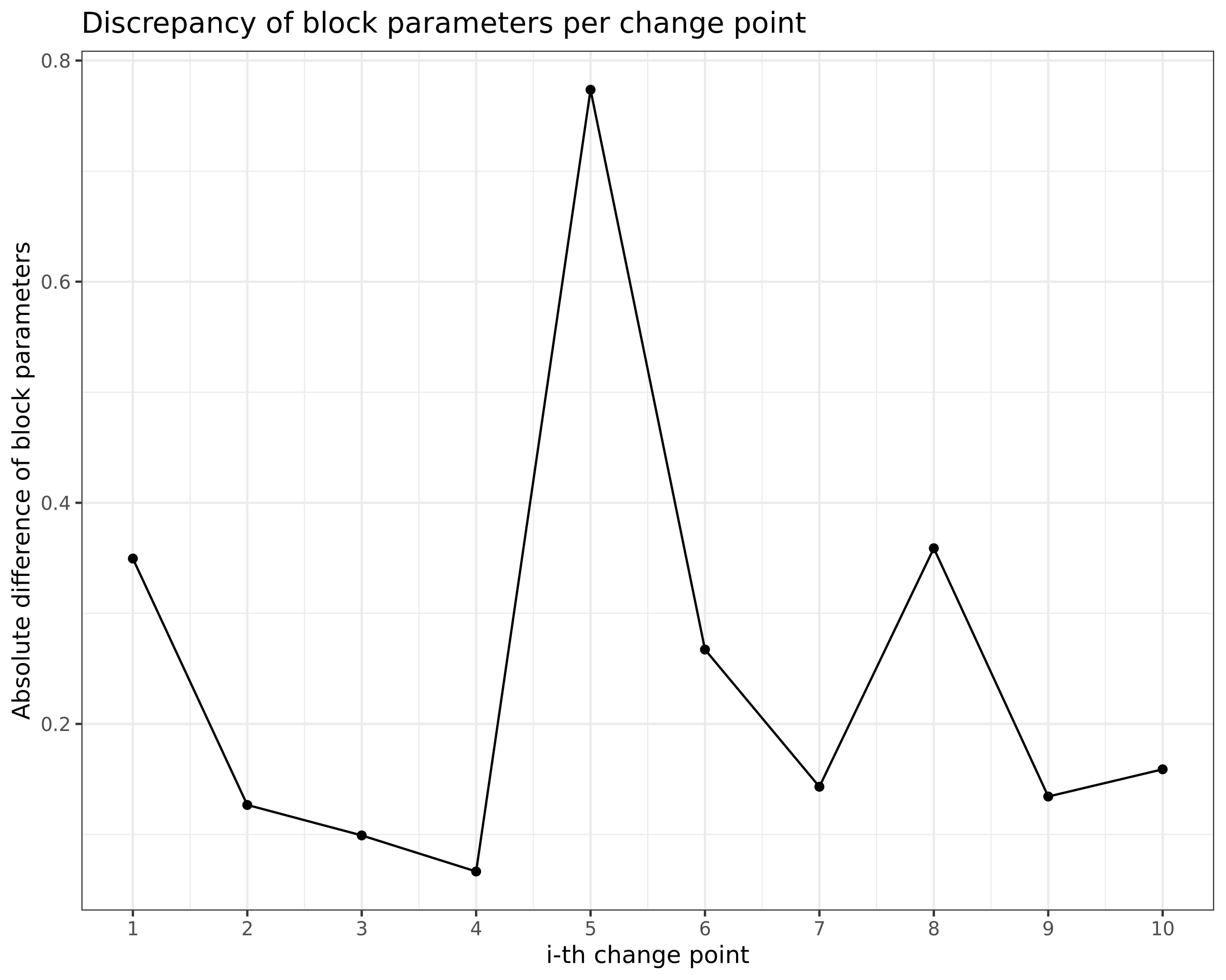}
\caption{The curves on the left figure shows the bootstrap averaged percentage of detection of true change-points as sample size grows. The average is taken over data sets with the same size. The curve on the right shows the absolute difference between the parameters of the blocks they divide. The difference is plotted with respect to the change-point index (cp index) and not location on the random vector.}
\label{fig:bootstrap_avg_detection_true_cp}
\end{figure}

The detection of each of the change points increases with sample size, and that change points which separates blocks with higher absolute difference of probability parameters are detected more frequently.

\section{ROH islands on African and European populations}



\begin{figure}
\centering
\includegraphics[scale = 0.35]{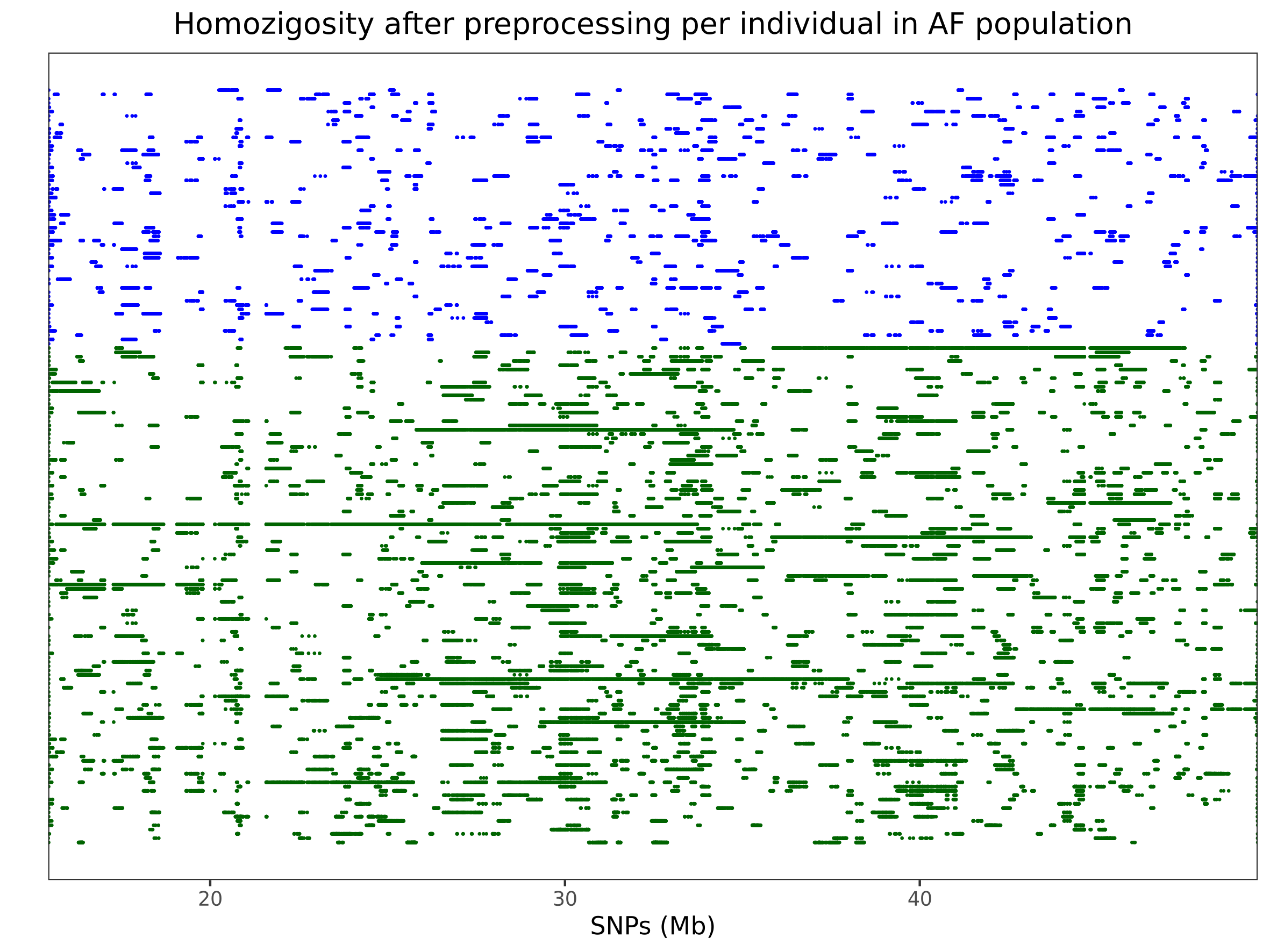}
\includegraphics[scale = 0.35]{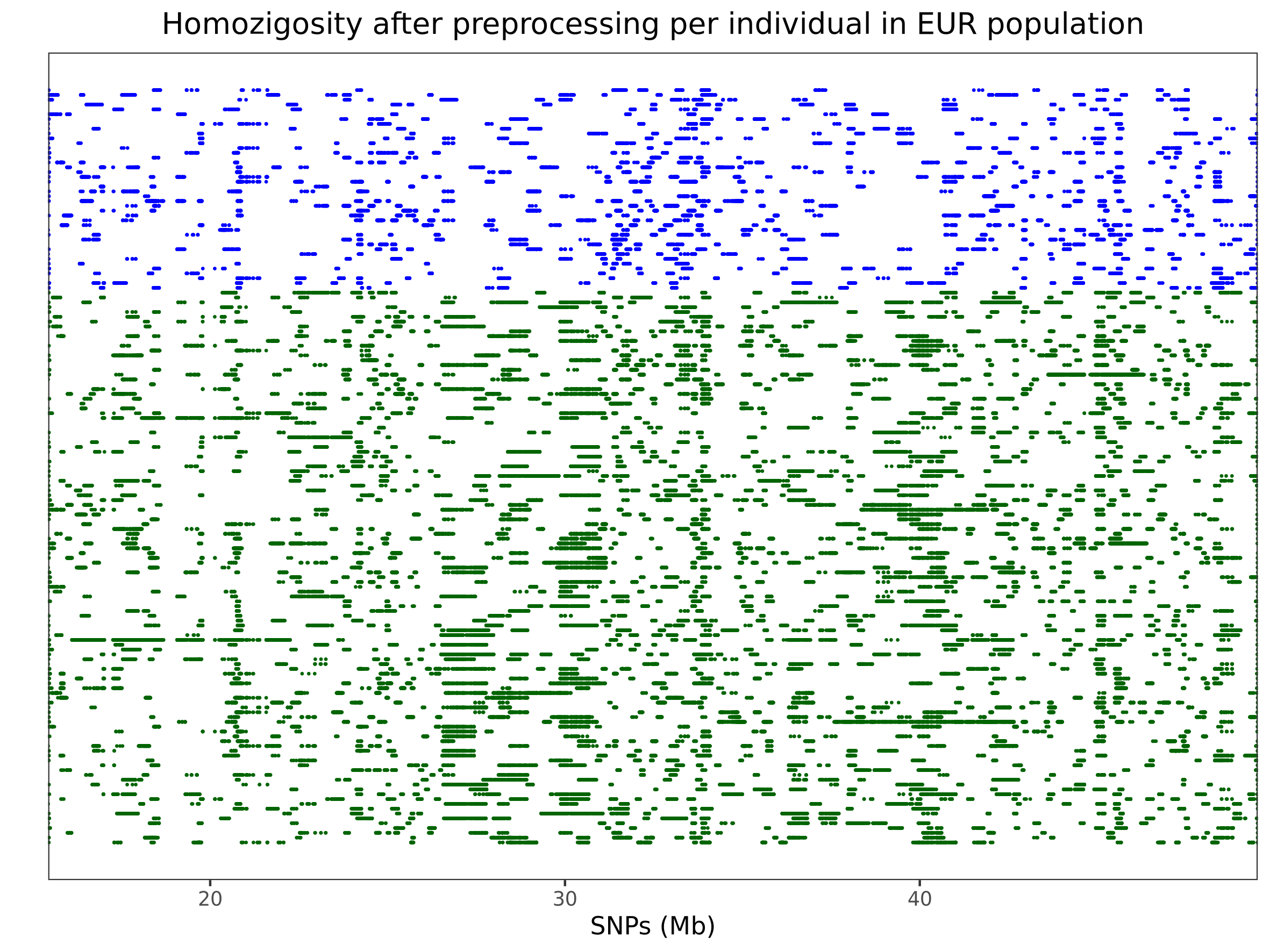}\\
\includegraphics[scale = 0.35]{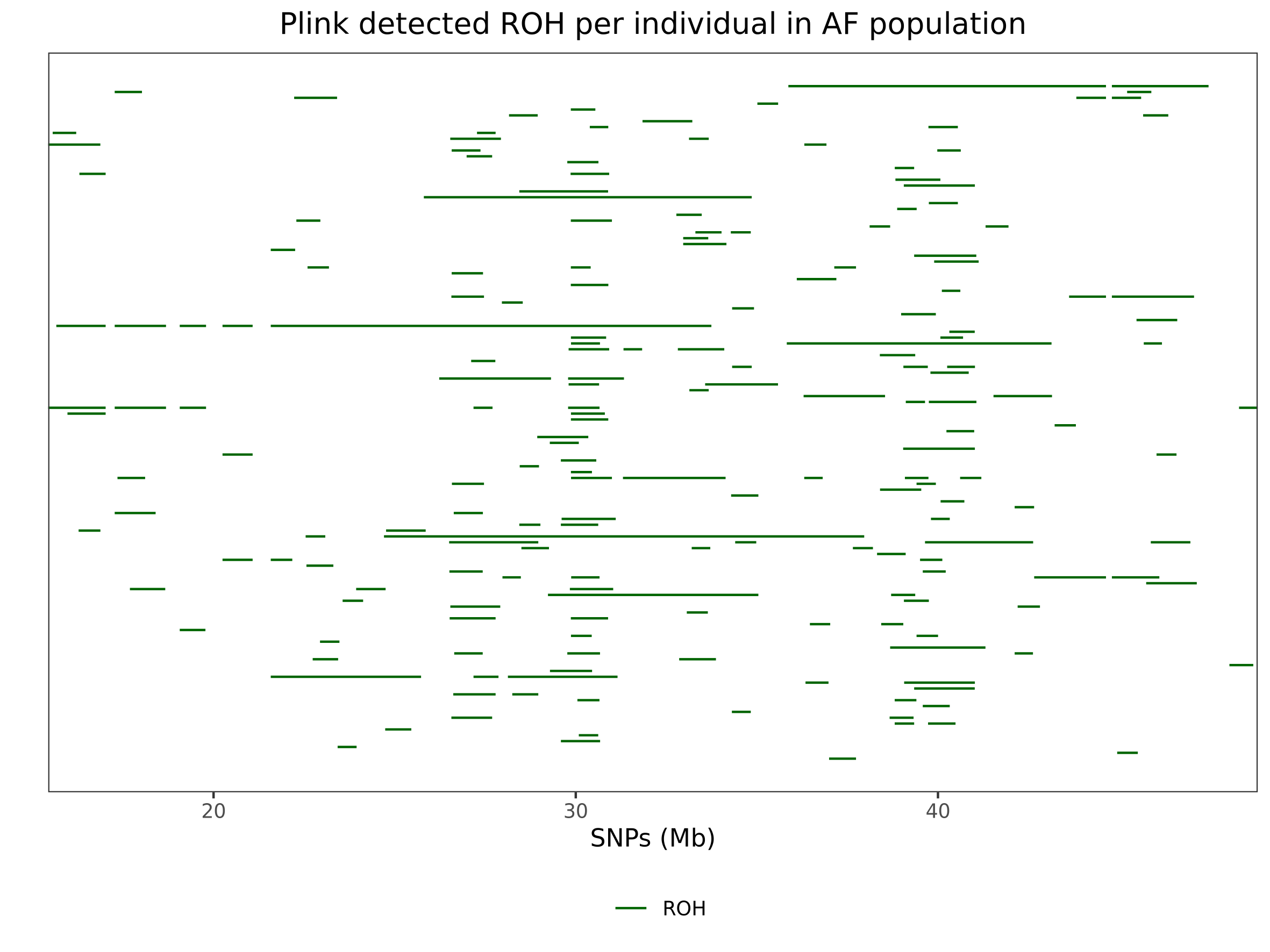}
\includegraphics[scale = 0.35]{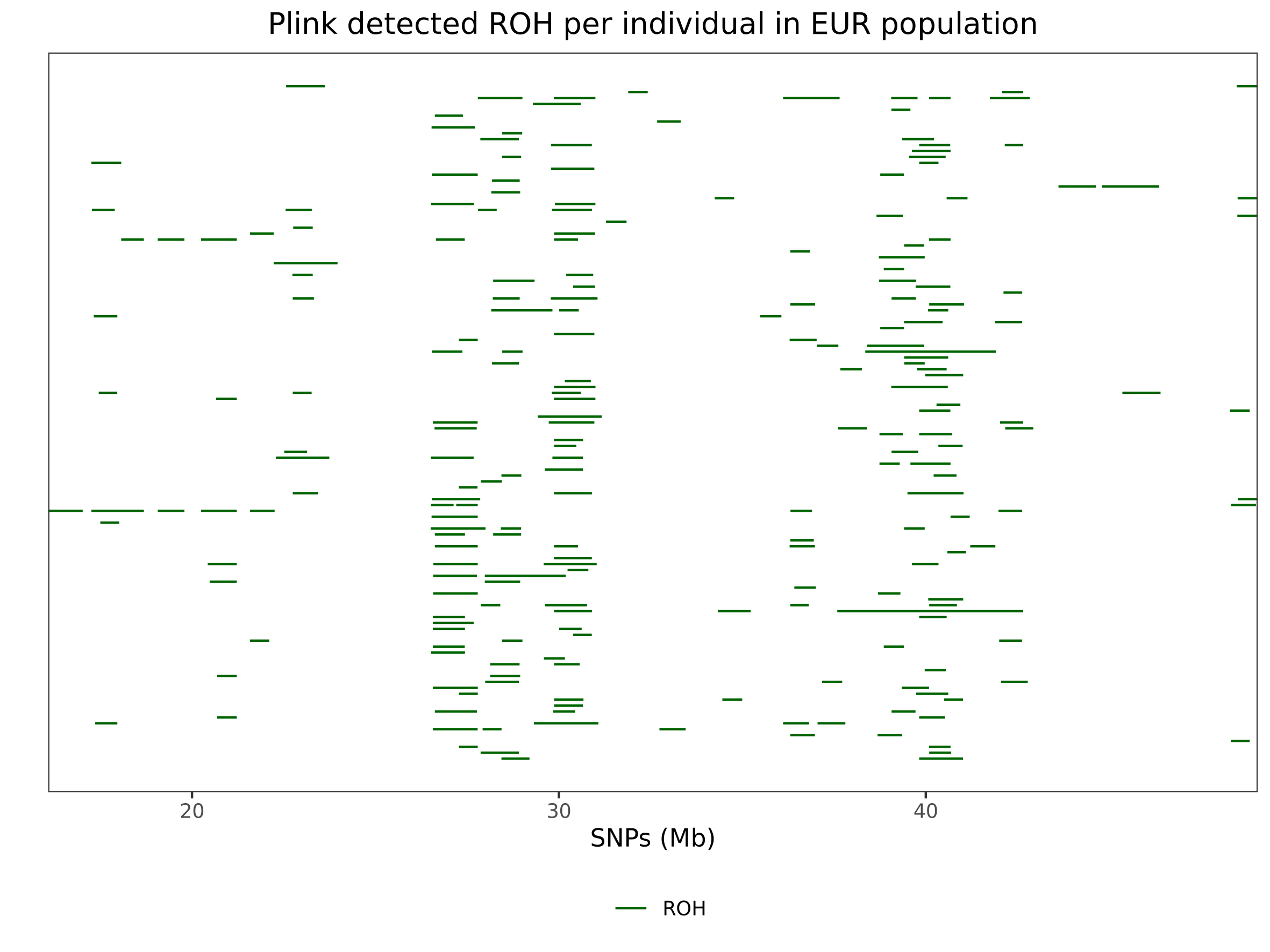}
\caption{ROH regions (green and blue segments) as defined by our method (top) and by PLINk 
(bottom) for African (left) and European (right) populations.
Each row corresponds to an individual and each column to a SNP marker. The blue segments on the top graphics correspond to individuals not having ROH detected by PLINK.}
\label{fig: AF ROH}
\end{figure}

As described in the introduction, we propose to frame the problem of ROH islands detection as a  change point detection problem assuming a Bernoulli marginal distribution for each codified SNP. Observe that we do not need to assume independence between different SNPs in order to have consistent estimators of the change-points, we only assume consecutive parameters on the blocks of the distribution are different. In particular, the ROH islands can be determined as those blocks with a high value of the estimated parameter. 




Choosing a regularization function that suits the problem is not an easy task. We can use domain knowledge to construct a proper regularization function. The first consideration is that the distance between SNPs is not uniform. That is, the distance between the $i$-th and $j$-th SNP is not $|i-j|$, but rather $|B(i)-B(j)|$, where $B$ is a function that maps each SNP to its physical location on the chromosome. The physical location of a SNP is measured as the number of base pairs before that particular SNP. The second observation is that very small blocks are usually not interesting for the analyst. It is usual to set a minimum block size in which SNPs are grouped.

Considering these observations, we define the regularization function $\rho$ for the block $r:s$ as 
\begin{equation*}
    \rho(r, s) = 
        \begin{cases}
            +\infty &\quad\text{ if}\quad \frac{|B(s) - B(r)|}{\beta} \leq T\\
            \frac{1}{|B(s)-B(r)|/\beta} &\quad\text{ otherwise.}
         \end{cases}
\end{equation*}
In the expression above, $T$  denotes a threshold for the minimal  physical distance allowed for an ROH island, and $\beta = 10^6$ is a scaling factor to work on mega bases unit. 
The regularization function $R(C)$ in \eqref{pl} is then defined as the sum of the function $\rho$ over the different blocks in $C$, as in \eqref{Radd}.  


The SNPs data we analyzed was obtained from the  Human Genome Diversity Project (HGDP), and consists of 
approximately 600,000 SNP markers from Illumina HuHap 650k platform \citep{li2008worldwide}. We considered
individuals from 
African and 
European populations.
On this dataset, each row represents an individual from the population, and each column corresponds to each SNP.

For each population, we performed ROH identification for each individual with the 
criteria described by \citet{mcquillan2008runs} and \citet{kirin2010genomic}, using the software PLINK v1.9 \citep{purcell2007plink}, and compared to the data after preprocessing. The results for chromosome 22 on both populations are shown in Figure~\ref{fig: AF ROH}. 

We then estimated the change-point set and parameters on each block for each population using a value of $\lambda$ that was selected using BIC from the set $\{0.1, 1, 10\}$. The threshold size $T$ was set to be $1\%$ of the chromosome size, and the penalization for the sample size set to $J(n) = \sqrt{n}$. Figure~\ref{fig: ROH_comparison}  shows a comparison between the high frequency regions  detected by PLINK and by our method.

For both populations, we observe that the peaks with higher ROH  frequency detected by PLINK in general correspond to blocks with higher parameter detected by the change point detection method. 
There are also blocks with high parameter not detected by PLINK, showing that the methods do not always provide the same ROH islands. This can be understood if we look back at Figure~\ref{fig: AF ROH}. There are regions that have a very high number of overlapping blocks in the processed data, but that do not appear at PLINK. Since they correspond to regions with uninterrupted homozigosity SNPs in the processed data, the output of our method will detect such regions as a block with very high probability parameter, but not in PLINK, since almost no individuals with ROH have detected ROH at that region. 
These contrasting regions can rise due to small values of the preprocessing parameter, PLINK's criteria, or the presence of a high amount of missing data in a specific genomic region.

Finally, Figure~\ref{fig: ROH_bootstrap} shows an application of the bootstrap to provide how reliable is the estimate of the change point set. We see some spikes on the percentage of detection near the start of most blocks, indicating these are indeed the most detected indices.
However, notice that the spikes do not have a very high percentage of detection. This is due to the fact that the change points oscillate near the ends of the segments. Since only one change point can be detected at a time in such regions due to the restriction imposed on the minimum block size, the confidence of detection  actually refers to the confidence of the change points on a given region and not on a specific position. A last interesting observation is that some blocks have almost no percentage of change points detected in their central parts, indicating that the model is confident about the boundaries of most of the blocks. 

\begin{figure}
\centering
\includegraphics[scale = 0.5]{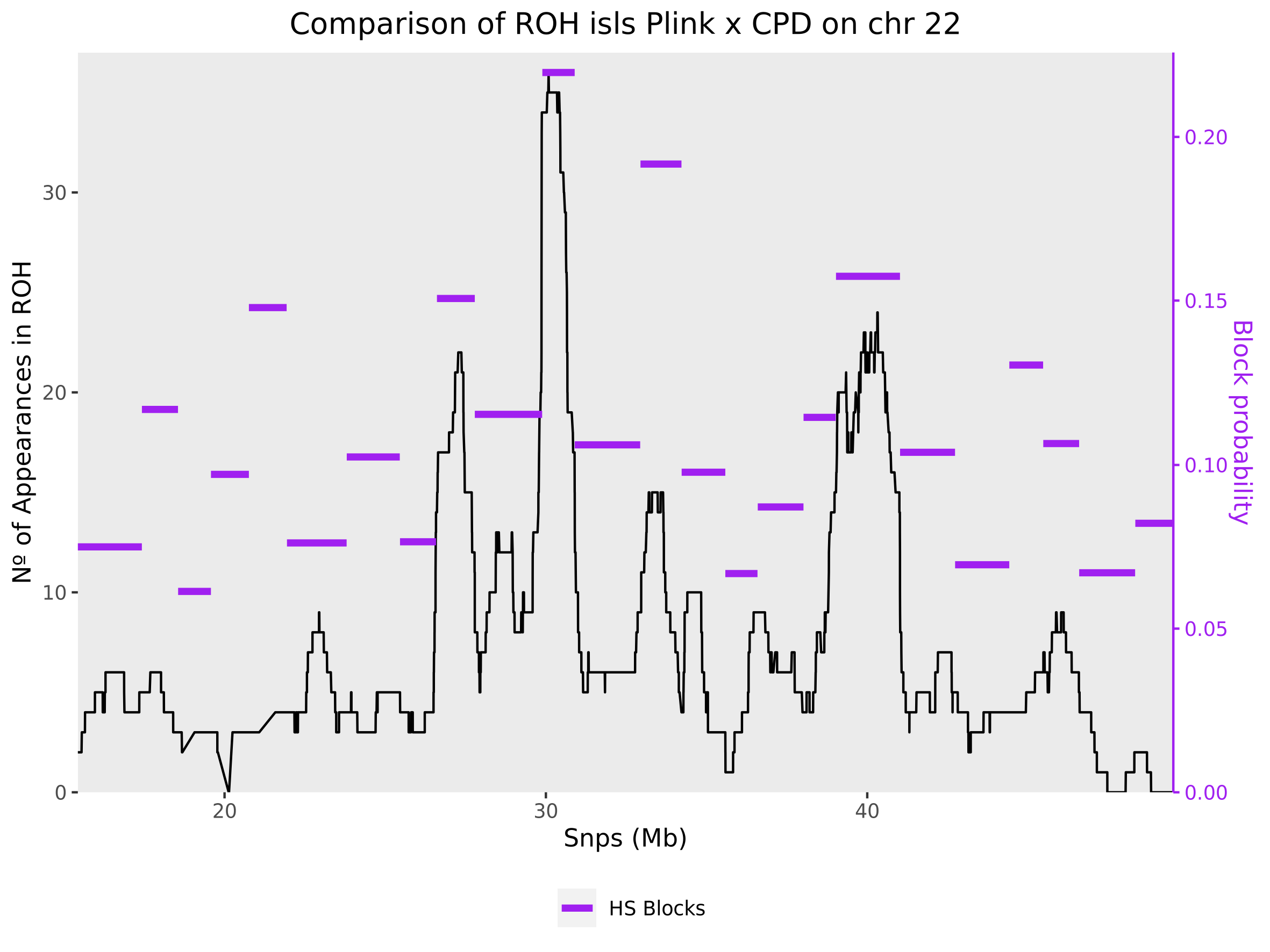}

\includegraphics[scale = 0.5]{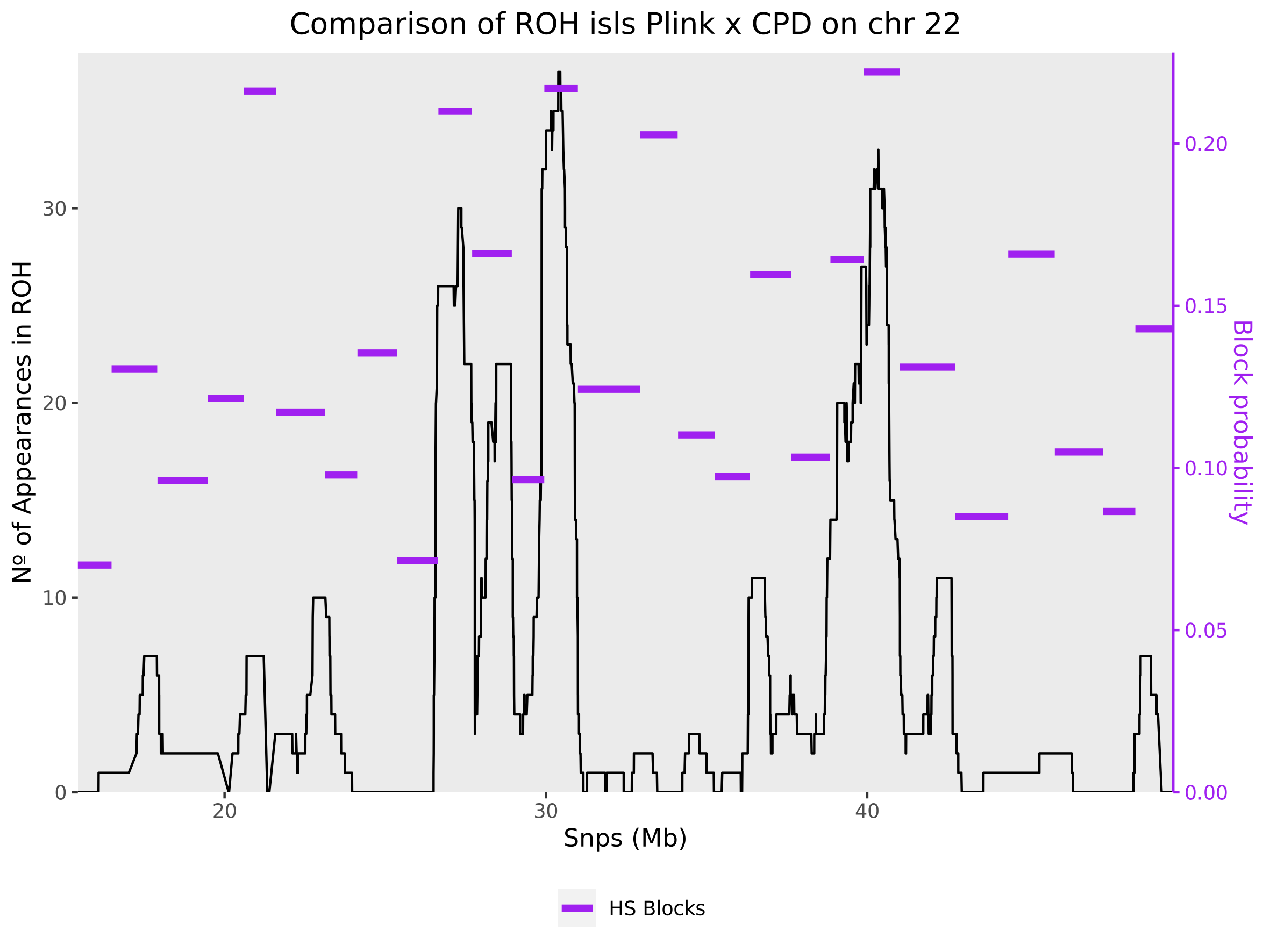}
\caption{Comparison between the blocks detected by the change-point method and the methodology proposed in \citet{mcquillan2008runs} and \citet{kirin2010genomic}, using PLINK. The black lines indicate the frequency of 
occurrence of each SNP in an ROH as detected by PLINK (left scale). The purple lines indicate the blocks
detected by our methodology. The height of each block corresponds to the estimated parameter on each block (right scale).}
\label{fig: ROH_comparison}
\end{figure}

\begin{figure}
\centering
\includegraphics[scale = 0.5]{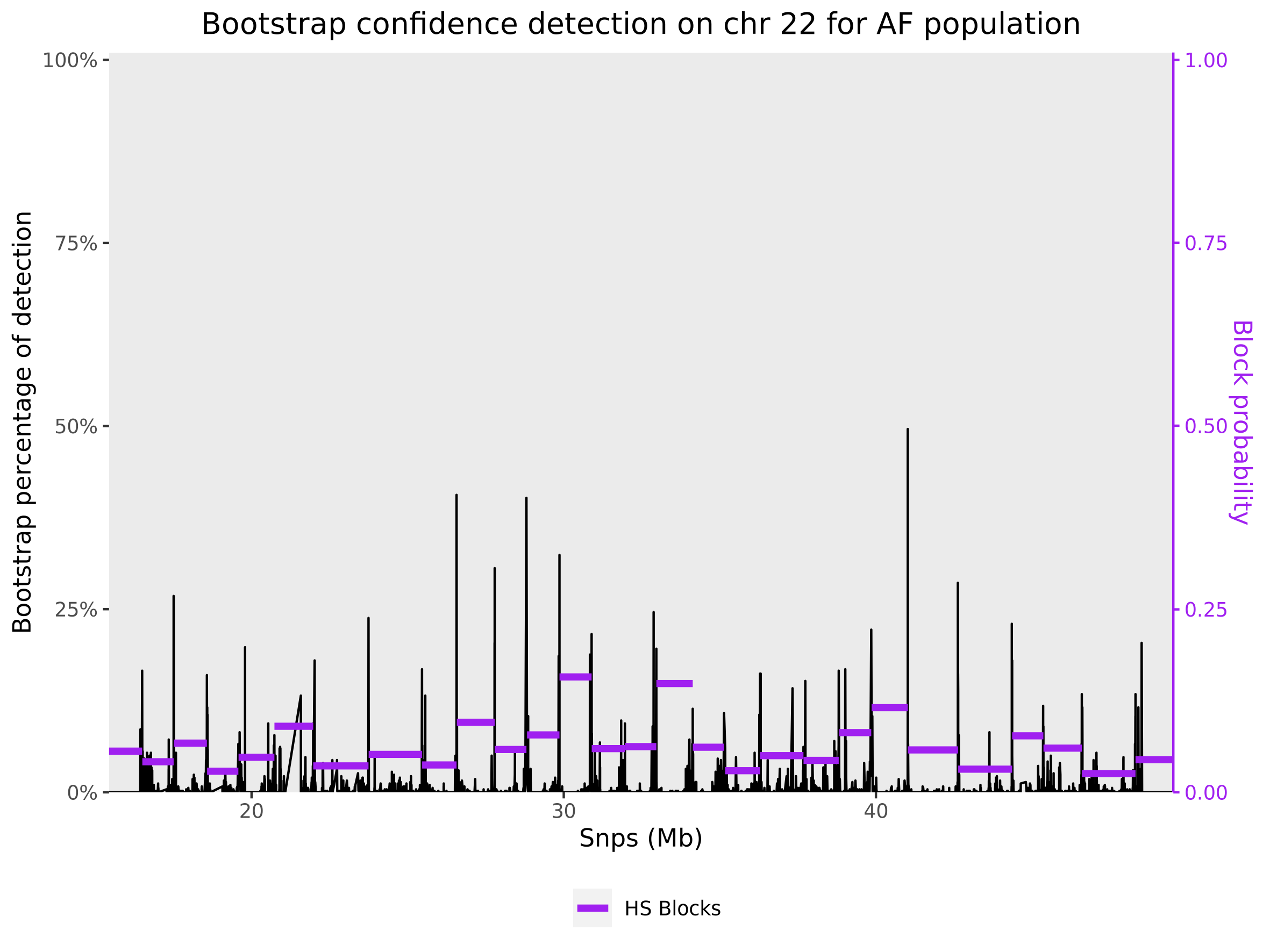}
\includegraphics[scale = 0.5]{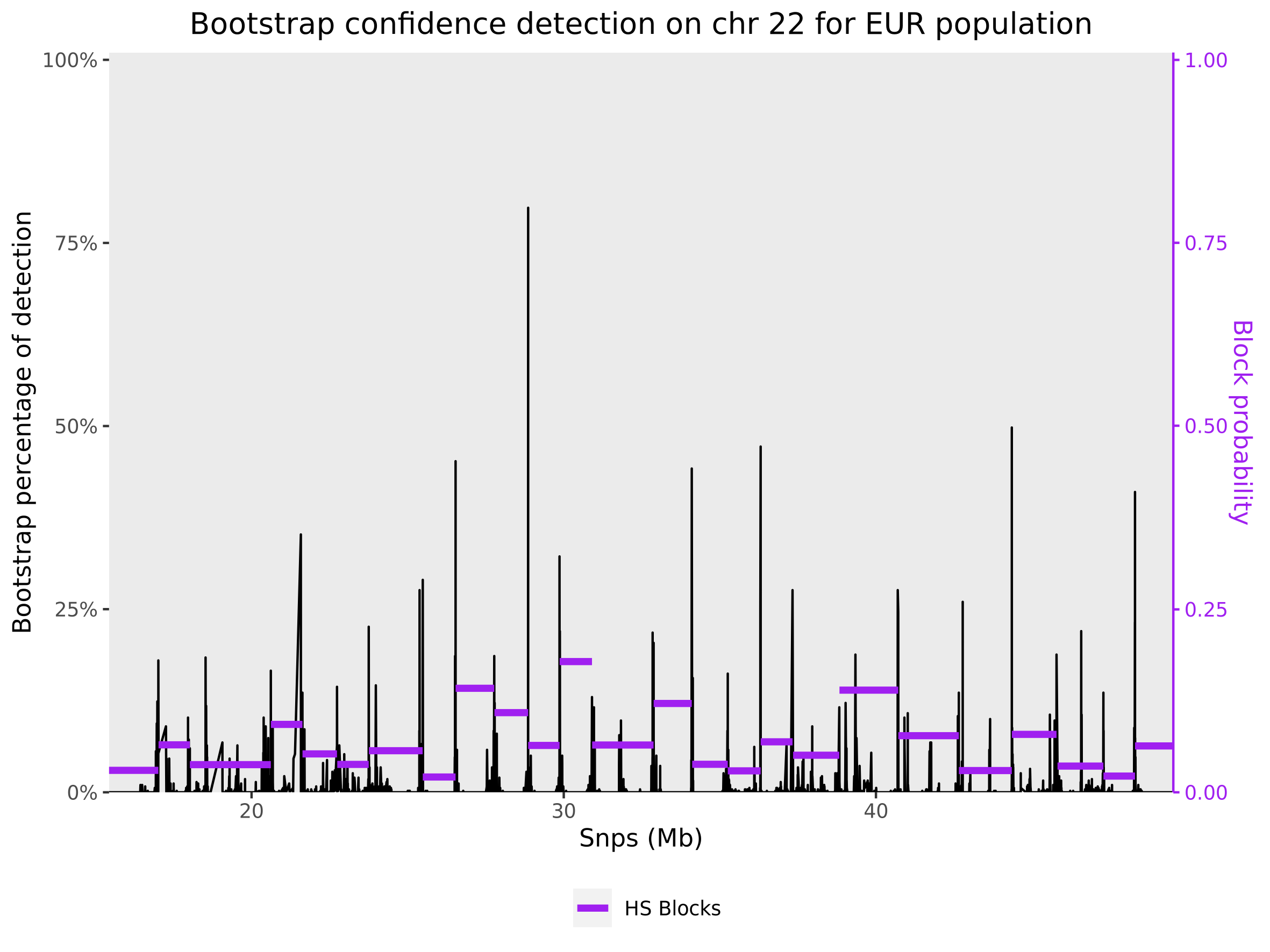}
\caption{Bootstrap confidence of detection of the estimated change point set with 500 replications. Left scale shows the bootstrap percentage of detection as a change point per index. The purple lines indicate the blocks
detected by our methodology. The height of each block corresponds to the estimated parameter on each block (right scale).}
\label{fig: ROH_bootstrap}
\end{figure}

\section{Discussion}

In this paper, we studied a different change point detection scenario that arises with new applications. The inferential goals are the same as the classical problem, but consistency is studied differently, and exact estimation for the change point set is required. We proved that a penalized maximum likelihood approach can be applied and that this method consistently estimates the change point set for different classes of distribution families. We also showed how the bootstrap could be used to obtain confidence estimates for each change point and the whole change point set.

The penalization constant $\lambda$  appearing in the definition of the estimator has to be set by the user or estimated from data. Our theorems guarantee that the algorithms are consistent as long as the constant is fixed. However, since the speed of convergence depends on this constant, it is interesting to select a value that enhances convergence speed.
    
If our method is used in a regression setting as part of feature selection for other algorithm, cross-validation can be used to select a proper value for $\lambda$. If it is used in a more descriptive manner, BIC can be used for selection, such as applied in the simulations. As long as the proposed set for lambda is finite and fixed, the algorithms continue to be consistent. 
The simulation study suggests that the choice of the algorithm should depend on sample sizes. The performance of the hierarchical algorithm seems to be equivalent to the exact solution provided by the dynamical programming algorithm for bigger sample sizes. The exact algorithm remains better for small sample sizes since it seems more robust when selecting the penalization constant. There is a clear speed gain using the greedy approach since it is asymptotically $O(m(n + |C|))$, while the exact algorithm has complexity $O(nm + m^2)$. This difference is crucial in high dimensional analysis, such as applications in genetics. Of course, limiting the number of estimated blocks also aids in time speed and might avoid over-segmentation.

This new approach for change point detection is motivated by the problem of identifying homozygosity islands on the genome of individuals in a population. Our method directly tackles the issue of determining the homozygosity islands at the population level without analyzing single individuals and then combining the results, as is made nowadays in state-of-the-art approaches. Applying this method to real data of two populations from the Human Genome Diversity Project (HGDP) showed the potentiality of these algorithms to highlight highly homozygous regions in the genome.  The non-homogeneity of the regularization function $R$ can provide flexibility to incorporate more domain knowledge of the application area.

There is much to explore in future research. A first step will be to check if the assumptions hold for a wider class of distributions, such as  Exponential Family, finite-state Markov Chains and Multivariate Gaussian within each block. For the i.i.d Gaussian with unknown mean and variance, the proof that assumption \ref{ass_PL} holds is very similar to the one provided for the case with known variance in the supplementary material. However, proving assumption \ref{ass_HS} seems to be more challenging. A second interesting question is to prove or give a counter-example of whether \ref{ass_HS} implies \ref{ass_PL}. That is if the convergence of the greedy algorithm implies the convergence of the exact algorithm as well. A third interesting question is how to derive rates of convergence of the change point set estimators.


\section{Additional information}
Simulations and computation of the estimators were performed using the R software \cite{Rsoftware} and the
R package \href{https://github.com/Lucas-Prates/blockcpd}{blockcpd} developed
for this task. The package provides easy-to-usedr  of usage and implements the algorithms for the Bernoulli, Gaussian with unknown mean and variance, Exponential and Poisson families for the i.i.d case. An implementation for a two state Markov Chain is also provided. The c++ R framework provided by the Rcpp package \citep{rcpp_2011,rcpp_2017} was used to overcome performance bottlenecks.

    

\acks{ }

This article was produced as part of the activities of FAPESP's\footnote{S\~ao Paulo Research Foundation, Brazil}  Research, Innovation and Dissemination Center for Neuromathematics, grant 2013/07699-0, and FAPESP's projects ``Stochastic modeling of interacting systems'', grant 17/10555-0  and  ``Model selection in high dimensions: theoretical properties and applications'',  grant 2019/17734-3. During the development of this work L.P was
supported by a CAPES\footnote{Coordination of Superior Level Staff Improvement, Brazil} fellowship.
F.L is partially supported by a CNPq research fellowship, grant 311763/2020-0.

\appendix

\section{Proofs of Theoretical Results}
\label{appendix:proof_of_consistency}

In this section we present the proofs of Proposition~\ref{prop: hier_complexity} stated in Section~\ref{sec:computation} and all the theoretical results stated in Section~\ref{sec:consistency}.

\begin{proof}[Proof of Proposition \ref{prop: hier_complexity}]
\label{proof: hier_complexity}

We will first prove that the algorithm will eventually almost surely do exactly $2k_{C^*} - 1$ recursive calls by induction on $k_{C^*}$.  Consider the base case $C^* = \{0,m\}$, when there is no change point except the extremes. The first run of the algorithm is always on $1:m$. Since it is asymptotically correct, it halts after the first iteration. Therefore, the number of recursive calls is $1 = 2k_{C^*} - 1$.
Now suppose that the formula holds for $k_{C^*} \leq K-1$ for any value of $m$, we will prove that it holds for $k_{C^*} = K$. At the first call, the algorithm selects a change point $c$ and creates a recursive call on $1:c$ and on $(c+1):m$. Let $N_l$ be the number of change points on $1:c$ and $N_r$ be the number of change points on $(c+1):m$ Using induction hypothesis, we have that the algorithm does $2(N_l + 1) - 1$ calls on $1:c$ and $2(N_r + 1) - 1$ on $(c+1):m$. Notice that $N_l + N_r = K - 2$ because $k_{C^*} = K$ implies there are $K-1$ change points for the original change point set, but $c$ is not counted on $N_l$ or $N_r$. Hence, the total number of calls is
\[
1 + (2N_l + 1) + (2N_r + 1) =  2(N_l + N_r) + 3 = 2K - 1\,. 
\]
To obtain the final complexity, note that a call in the interval $r:s$ the algorithm does $|s - r + 1| \leq m + 1$ comparisons and memory checks, and therefore has complexity of $O(mk_{C^*}))$. 
\end{proof}

\begin{proof}[Proof of Theorem~\ref{PL_Consistency}]
First we will prove that $\widehat{C}(\bx^n)$ will
almost surely contain $C^*$. Let $C\in\C$ such that 
$C\not\supseteq C^*$. Then by (PL1) in Assumption~\ref{ass_PL} we have that eventually almost surely 
\[
\frac1nl(C;\bx^n) - \frac1nl(C^*;\bx^n) \;<\; -\frac{\alpha}2 \,.
\]
On the other hand, as $J(n)=o(n)$ and $R(C)$ is bounded  we have that
\[
\lambda (R(C) - R(C^*)) \frac{J(n)}n \;\to \; 0
\]
as $n\to\infty$. Therefore, eventually almost surely we have
\[
\pl(C^*;\bx^n) \;< \; \pl(C;\bx^n)\,.
\]
As the number of $C\not\supseteq C^*$ is finite we have that eventually almost surely $\widehat C\supseteq C^*$. 
Lets prove now that $\widehat C \not\supset C^*$. Assume $C\supset C^*$ so that $R(C) > R(C^*)$ and $l^*(C) = l^*(C^*)$. Then, by (PL2) we have that 
\begin{equation}
\begin{split}
\pl(C^*;\bx^n) - \pl(C;\bx^n)\;&\leq\; v(n) + \lambda(R(C^*) - R(C)) J(n)\\
& <\; 0
\end{split}
\end{equation}
eventually almost surely as $n\to\infty$. As a result we obtain that $\widehat C(\bx^n)=C^*$ eventually almost surely as $n\to\infty$. 
\end{proof}

    
We now present the proof of the consistency of the hierarchical estimator.
	
\begin{proof}[Proof of Theorem~\ref{HS_Consistency}]
The proof begins by showing that, at any given possible scenario, the algorithm takes a correct choice almost surely. Then, an inductive argument will guarantee that the algorithm is consistent. 
For every integer interval $I = r:s$, the possible scenarios are:
\begin{itemize}
\item [(a)] There are no change points in $I\setminus\{s\}$;
\item [(b)] There are change points in  $I\setminus\{s\}$.
\end{itemize}
The correct decision for the algorithm in case (a) is to halt, not performing more recursive calls for that interval. For  case (b), the correct decision is to choose any of the change points available and perform recursive calls on the sub intervals. We show that the algorithm will take the correct decision almost surely for both cases.\\
For case (a), suppose $I$ has no change points inside and let $u \, \in\, r:(s-1)$. 
By (H2) in Assumption~\ref{ass_HS}  we have that 
\begin{align*}
h_I(s) - h_I(u) \;&=\;  -\left[l(I; \mathbf{x}_I^n) - l(r:u; \mathbf{x}_I^n) - l((u+1):s; \mathbf{x}_I^n)\right] + \\
&\quad\; + J(n)\lambda\left[\rho(r, s) - \rho(r, u) -\rho(u+1, s)\right]\\
& <\; v(n) + J(n)\lambda\left[\rho(r, s) - \rho(r, u) -\rho(u+1, s)\right] \\
&< \;0\,,
\end{align*}
eventually almost surely as $n \to \infty$, because $\rho(r, s) - \rho(r, u) -\rho(u+1, s) < 0$. Hence, no splitting will be done eventually almost surely, and the algorithm will not perform more recursive calls inside $I$.\\
For case (b) we have to prove that the algorithm will almost surely choose true change points to split the interval. The inequality (H1) on Assumption~\ref{ass_HS} implies that, for any $u$ that is not a change point, there exists a change point $c^*$ in $I\setminus\{s\}$ such that
\begin{equation*}
h_I^*(c^*)  \; < \;h_I^*(u)\,.    
\end{equation*}    
Then as $J(n) = o(n)$  we have  that eventually almost surely as $n\to\infty$
\begin{equation*}
 \frac1n h(u) - \frac1n h(c^*) \;\geq\; \frac12(h^*(u) - h^*(c^*)) \;>\; 0 \,,
\end{equation*}    
and therefore $\hat c \in C^*$, eventually almost surely as $n\to\infty$.


We finish the proof by using mathematical induction on the number of variables $m$. If $m = 1$, then there are no change points on the model, and the algorithm will not even have comparisons to make. Hence, the estimated change point set will be empty by construction and therefore equal to the true set of change points.
Assume now that the algorithm is consistent for every vector of dimension $\tilde{m} <= m-1$. We will prove that it will be consistent for vectors of dimension $m$.
The first run of the algorithm is on the interval $1:m$. By case (a),  if there are no change points on this interval, the algorithm will almost surely not split the interval and will halt. Hence, the estimated set will almost surely be equal to the true set of change points. On the other hand, if there are change points on the interval $1:m$, the proof in case (b) shows us again that the algorithm eventually almost surely takes the correct choice and splits the interval at a change point $c\in C^*$. After the split, recursive calls are made on $1:c$ and on $(c+1):m$. But the length of these vectors is at most $m-1$, and by the induction hypothesis, the algorithm will eventually almost surely retrieve all the change points in $1:c$ and $(c+1):m$ exactly. By a union of those change point sets,  we have that the final estimated change point set is equal to $C^*$ eventually almost surely as $n\to\infty$.
\end{proof}

\begin{proof}[Proof of Lemma~\ref{conditions_ineq_HS}]
Let $h_I^*(u)$ be the minimum value of $h_I^*$ on $I$. First, notice that by (a),  $h_I^*(u) \leq h_I^*(c) < h_I^*(s)$, so the minimum is strictly smaller than the value of the function at the end of the interval. 
We now show that $u$ must be a change point in $C^*$.  Suppose that $u\not\in C^*$, and let $I_j^*=[c_{j-1}^*,c_{j}^*]$ be the  unique interval that contains $u$. Since $h_I^*$ is concave in $[c_{j-1}^*, c_{j+1}^*]$, if the minimum is attained at an interior point, then $h_I^*$ must be constant on $I_j^*$. However, by (c), this would imply that $h_I^*(u) = h_I^*(s)$, which is a contradiction. Then $u\in C^*$ and
\begin{equation*}
 \min_{c \in I\setminus\{s\}\cap C^*} \; h_I^*(c)\;<\; \min_{c \not\in I\setminus\{s\}\cap C^*} \; h_I^*(c) \,. \hfill\qedhere
\end{equation*}
\end{proof}
    
Before presenting the proof of Lemma~\ref{conditions_iid_HS} we state and prove the following basic lemma. 

\begin{lemma}\label{basic_lemma}
Let $f,g:\mathbb{R}\to\mathbb{R}$ be twice differenciable convex functions. If there exists a constant $\alpha$ such that
\begin{equation*}
f(x) + g(x) = \alpha \,,
\end{equation*}
then $f$ and $g$ must be linear functions.
\end{lemma}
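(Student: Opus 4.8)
The plan is to differentiate the identity twice and then exploit convexity. Since $f(x)+g(x)=\alpha$ holds for every $x\in\mathbb{R}$ and both functions are twice differentiable, I would differentiate the identity twice to obtain
\begin{equation*}
f''(x)+g''(x)=0\qquad\text{for all }x\in\mathbb{R}\,.
\end{equation*}
Convexity of $f$ and $g$ yields $f''(x)\geq 0$ and $g''(x)\geq 0$ pointwise, so a sum of two nonnegative quantities that vanishes forces each summand to vanish; that is, $f''(x)=g''(x)=0$ for every $x$.

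Once I know that the second derivative of $f$ is identically zero on all of $\mathbb{R}$, I would conclude that $f$ is affine, i.e.\ of the form $f(x)=ax+b$, and the same argument applied to $g$ shows it is affine as well. This is precisely the conclusion of the lemma, using ``linear'' in the affine sense. The passage from $f''\equiv 0$ to the affine form can be justified by integrating twice, or equivalently by applying the mean value theorem to $f'$ to see that $f'$ is constant.

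There is essentially no substantive obstacle here. The only two points that deserve to be stated carefully are, first, the use of convexity to move from the single equation $f''+g''=0$ to the \emph{separate} vanishing of each second derivative (this is where nonnegativity is crucial), and second, the standard fact that a twice differentiable function with identically zero second derivative on $\mathbb{R}$ is affine. Both are elementary, so I expect the proof to be short.
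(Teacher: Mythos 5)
Your proposal is correct and follows exactly the same route as the paper's own proof: differentiate the identity twice to get $f''+g''=0$, use convexity to force $f''\equiv g''\equiv 0$, and conclude affinity. No further comment is needed.
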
    
    
\begin{proof}
Differentiating both sides twice we obtain that 
\begin{equation*}
f''(x) + g''(x) = 0 \,.
\end{equation*}
Since both $f''(x) \geq 0$ and $g''(x) \geq 0$, then $f''(x) = 0 = g''(x)$ and the result follows.
\end{proof}
    
\begin{proof}[Proof of Lemma~\ref{conditions_iid_HS}]
Observe that it is enough to consider $h^*_I$  as given by 
\begin{equation*}
h_I^*(u) = -(u-r+1)\psi(\theta_{r:u}) - (s-u)\psi(\theta_{(u+1):s}) \,,\quad u\in I\,,
\end{equation*}
as the constant $C$ does not depend on $u$. For any $u\in [c_{j-1}^*,c_j^*]$ define $t(u) = \frac{c_{j-1}^*-r+1}{u-r+1}$. Then we can write 
\begin{equation*}
\theta_{r:u} = t(u)\theta_{r:c_{j-1}^*} + (1-t(u))\theta_{j}^* \,.
\end{equation*} 
We now check each one of the condition of Lemma~\ref{conditions_iid_HS}, beginning with (b).  To prove that  $h^*_I$ is concave on the interval $[c_{j-1}^*, c_{j}^*]$, it is sufficient to show that  $(u-r+1)\psi(\theta_{r:u})$ and $(s-u)\psi(\theta_{(u+1):s})$ are convex on this interval. 
Take $g(u) = (u-r+1)\psi(\theta_{r:u})$, and treat the vectors as column vectors. Then the first derivative of $g$ is
\begin{align*}
g'(u) \;&=\; \psi(\theta_{r:u}) + (u-r+1)t'(u) (\theta_{r:c_{j-1}^*}-\theta_{j}^*)^T \nabla \psi(\theta_{r:u})\\
&=\; \psi(\theta_{r:u}) -t(u)(\theta_{r:c_{j-1}^*}-\theta_{j}^*)^T\nabla \psi(\theta_{r:u})\,,
\end{align*}
where $\nabla \psi(\theta_{r:u})$ is the gradient of $\psi(\theta)$ evaluated at $\theta_{r:u}$ and the second equality follows from the fact that $(u-s+1)t'(u) = -t(u)$.
The second derivative of $g$ is then
\begin{align*}
g''(u) 
\;&=\; \overbrace{t'(u)(\theta_{r:c_{j-1}^*}-\theta_{j}^*)^T\nabla \psi(\theta_{r:u}) - t'(u)(\theta_{r:c_{j-1}^*}-\theta_{j}^*)^T \nabla \psi(\theta_{r:u})}^{=\;0}\\
&\quad +(-t'(u)t(u))(\theta_{r:c_{j-1}^*}-\theta_{j}^*)^T H\psi(\theta_{r:u})(\theta_{r:c_{j-1}^*}-\theta_{j}^*)\\
&\geq\; 0\,,
\end{align*}
because $H\psi(\theta_{r:u})$, the Hessian matrix of $\psi$ evaluated at $\theta_{r:u}$,  is positive definite and  $-t'(u)t(u) \geq 0$.
We conclude that $g$ is convex on $[c_{j-1}^*, c_j^*]$. An analogous argument shows that $(s-u)\psi(\theta_{(u+1):s})$ is also convex, and then we finish verifying condition (b) in Lemma~\ref{conditions_ineq_HS}.
Now we will show that, if $h_I^*$ is constant on an interval $[c_{j-1}^*, c_{j}^*]$, then it must be equal to $h^*_I(s)$ on this interval.  So suppose that $h_I^*$ is constant on $[c_{j-1}^*, c^*_{j}]$. Then, for some $\alpha$ we must have that 
\begin{equation}\label{equality_alpha}
-(u-r+1)\psi(\theta_{r:u}) - (s-u)\psi(\theta_{(u+1):s}) \;=\; \alpha 
\end{equation}
for all $u\in [c_{j-1}^*, c^*_{j}]$. Lemma~\ref{basic_lemma} implies that $(u-r+1)\psi(\theta_{r:u})$ and $(s-u)\psi(\theta_{(u+1):s})$ must be linear
functions on $[c^*_{j-1},c^*_j]$, therefore $\psi(\theta_{r:u})$ and $\psi(\theta_{(u+1):s})$ are constants. 
Since $\psi$ is strictly convex and $\theta_{r:u}$ (respectively $\theta_{(u+1):s}$) is a convex combination of $\theta_{r:c_{j-1}^*}$ and $\theta_j^*$ (respectively of $\theta_{c_{j}^*:s}$ and $\theta_j^*$), we conclude that 
$\theta_{r:u} = \theta_{j}^* = \theta_{(u+1):s}$ for all
$u\in[c^*_{j-1},c_j^*]$. 
Moreover we have that   
\begin{equation*}
\theta_I \;=\; \frac{1}{|I|}\bigr[ (u - r + 1)\theta_{r:u} + 
(s - u)\theta_{(u+1):s}\bigl] \;=\; \theta_j^*
\end{equation*}
and therefore $h_I^*(u) = h_I^*(s)$, implying condition (c) in Lemma~\ref{conditions_ineq_HS}.
We finish the proof by showing condition (a) in Lemma~\ref{conditions_ineq_HS}, that is 
that the minimum is attained at the interior of $I$. Suppose that this is not the case, that is that the minimum is attained at $h_I^*(s)$. Since $h_I^*(s)$ is also a maximum, because by definition
\[
\frac1n l(r:u) + \frac1n l((u+1):s) \;\leq\; \frac1n  l(r:s) \quad\text{for all }u\in I
\]
we have that the function in the whole interval $I$ must be constant. Following the same arguments as in the proof of condition (b) we conclude that $\theta_i^* = \theta_j^*$ for all $i, j \, \in \, \{1, \dots, k^*\}$, which is a contradiction with the hypothesis that $\theta_j^*\neq \theta_{j+1}^*$ for all $j=1, \dots, k^*-1$. 
\end{proof}    


We will now verify that Assumptions \ref{ass_PL} and \ref{ass_HS} hold for the families of distributions in Examples \ref{example_categorical} and \ref{example_gaussian}. These results were stated in Propositions~\ref{prop_categorical} and \ref{prop_gaussian}, respectively. 
    

\begin{proof}[Proof of Proposition \ref{prop_categorical}]
Denote by $C^* = \{c_0^*, \ldots, c_{k_{C^*}}^*\}$ and $p^* = (p_1^*, \ldots, p_{k_{C^*}}^*)$ the true change point set and parameters, respectively. We will denote by $C = \{c_0, \ldots, c_{k_C}\}$ and $p= (p_1, \ldots, p_{k_C})$ any arbitrary change point set 
and we will denote by  $I_j$, respectively $I^*_j$,  the interval between two change points in $C$, respectively in $C^*$, that is $I_j=(c_{j-1}+1):c_j$ and $I^*_j=(c_{j-1}^*+1):c_j^*$. 
We begin by verifying hypotheses (PL1) and (PL2) in Assumption~\ref{ass_PL}. In the case of categorical random variables,  the log-likelihood function \eqref{log-likelihood} for the sample $\bx^n=\{\bx^{(i)}\}_{i=1}^n$, evaluated at the maximum likelihood estimator for $\theta$ can be written as
\begin{equation}
l(C;\bx^n) = \sum_{j=1}^{k_C} \sum_{a\in A} N_{I_j}(a) \log \frac{N_{I_j}(a)}{n|I_j|}\,,
\end{equation}
where
\begin{equation}
N_{I_j}(a) \;=\; \sum_{i=1}^n\sum_{c=c_{j-1}+1}^{c_j}\mathbf{1}\{x^{(i)}_c=a\}\,.
\end{equation}
To check (PL1) in Assumption~\ref{ass_PL} let any set $C\in\C$. 
By the Law of Large Numbers we have that 
\[
\frac1n l(C;\bx^n) \;\to\; \sum_{j=1}^{k_C}  |I_j| \sum_{a\in A}  \hat p_{I_j}(a) \log \hat p_{I_j}(a) =: l^*(C)\,,
\]
with 
\[
\hat p_{I_j}(a) \;=\; \sum_{r=1}^{k_{C^*}}  \frac{|I_j\cap I^*_r|}{|I_j|} p_r^*(a) \qquad a\in A\,. \quad
\]
Moreover, if $C\not\supseteq C^*$ we have that 
\begin{equation}
\begin{split}
l^*(C) - l^*(C^*) &\;= \;   \sum_{j=1}^{k_C} \sum_{r=1}^{k_{C^*}} |I_j\cap I^*_r|  \sum_{a\in A}  p_r^*(a) \log \hat p_{I_j}(a)  -  \sum_{r=1}^{k_{C^*}} |I^*_r|  \sum_{a\in A}  p_r^*(a) \log p_r^*(a)\\
&=  \; \sum_{j=1}^{k_C} \sum_{r=1}^{k_{C^*}} |I_j\cap I^*_r|  \sum_{a\in A}  p_r^*(a) \log \frac{\hat p_{I_j}(a)}{p_r^*(a)}\\
&< \;0
\end{split}
\end{equation}
unless $\hat p_{I_j} = p_r^*$ for all $r$ and $j$ with $|I_j\cap I^*_r|\neq\emptyset$. But this can only happen if $C\supseteq C^*$, which is a contradiction. 
To verify hypothesis (PL2) observe that for any $C\supseteq C^*$
\begin{equation*}
\begin{split}
l(C;\bx^n)  - l(C^*;\bx^n) \;&=\; \sum_{j=1}^{k_C} \sum_{a\in A} N_{I_j}(a) \log \frac{N_{I_j}(a)}{n|I_j|} - \sum_{j=1}^{k_{C^*}} \sum_{a\in A} N_{I_j^*}(a) \log \frac{N_{I_j^*}(a)}{n|I_j^*|}\\
&\leq\; \sum_{j=1}^{k_C} \sum_{a\in A} N_{I_j}(a) \log \frac{N_{I_j}(a)}{n|I_j|} - \sum_{j=1}^{k_{C^*}}  \sum_{a\in A}  N_{I_j^*}(a)  \log p_j^*(a)\,,
\end{split}
\end{equation*}
where the last inequality follows because $N_{I_j^*}(a)/n|I_j^*|$ are the maximum likelihood estimators for $p_j^*(a)$.  As $C\supseteq C^*$, we have that any interval $I_r^*$ contains one or more intervals $I_j$, and 
\[
N_{I_r^*}(a) \; = \; \sum_{j: I_j\subseteq I_r^*}N_{I_j}(a) \qquad\text{ for all }a\in A\,.
\]
Then, we can write the difference above as
\begin{equation*}
    \begin{split}
\sum_{j=1}^{k_C} \sum_{a\in A} N_{I_j}(a) \log \frac{N_{I_j}(a)/n|I_j|}{p_{i_j}^*(a)}
    \end{split}
\end{equation*}
where $i_j$ is the corresponding index in $C^*$. Therefore we obtain that 
\begin{equation*}
\begin{split}
l(C;\bx^n)  - l(C^*;\bx^n) \;&\leq\; \sum_{j=1}^{k_C}  n |I_j|\, D(\widehat p_j ; p_{i_j}^*)\,,
\end{split}
\end{equation*}
where
\[
\widehat p_j(a) \;=\; \frac{N_{I_j}(a)}{n|I_j|}\,,\quad a\in A 
\]
and $D$ denotes the \emph{Kullback-Leibler} divergence between the probability distributions 
$\{\widehat p_j(a)\}_{a\in A}$ and $\{p^*_{i_j}(a)\}_{a\in A}$. 
By a well-known inequality, see for example  \citet[Lemma 6.3]{Csiszar2006} we have that 
\[
D(\widehat p_j(a) | p_{i_j}^*(a)) \;\leq\; \sum_{a\in A} \frac{|\widehat p_j(a) - p_{i_j}^*(a)|^2}{p_{i_j}^*(a)}\,.
\]
As the difference $N_{I_j}(a) - n|I_j|p_j(a)$ can be written as a sum of zero-mean independent random variables with finite variance, we have, by the Law of the Iterated Logarithm, see for example   \citet[Theorem 3.52]{breiman1969probability}, that 
\[
|\widehat p_j(a) - p_{i_j}^*(a)| \;\leq\; \sqrt{\frac{c\log\log n|I_j|}{n|I_j|}} \;\leq\;  \sqrt{\frac{c\log\log nm}{n|I_j|}}
\]
for a given constant $c$ and for all $a\in A$, eventually almost surely as $n\to\infty$. 
Hence for all $\delta > 0$ we have that 
\[
|\widehat p_j(a) - p_{i_j}(a)| \;\leq\; \sqrt{\frac{\delta \log n}{n|I_j|}}
\]
eventually almost surely as $n\to\infty$.
Finally, we obtain that 
\begin{equation*}
\begin{split}
l(C;\bx^n)  - l(C^*;\bx^n) \;&\leq\;  \frac{k_C\delta}{p_{\min}^*} \log n\;\leq\; \frac{m\delta}{p_{\min}^*}\log n \,,
\end{split}
\end{equation*}
with $p_{\min}^* = \min\{p_j^*(a) \colon p_j^*(a)>0,  j=1,\dots,k_{C^*}, a\in A\}$.  On the other hand, it is easy to see that 
\[
l(C;\bx^n)  - l(C^*;\bx^n) \;\geq\; 0
\]
for any $C\supseteq C^*$. Then, hypothesis (PL2) holds for $v(n)=\delta\log n$, for any $\delta>0$. This establishes the conditions on Assumption~\ref{ass_PL}.

Now we verify the hypotheses (H1)-(H2) in Assumption~\ref{ass_HS}. Observe that as in the case of (PL1), the Law of Large Numbers can be invoked to prove that
\begin{equation}
\frac1n l(I;\bx^n) \;\to\; l^*(I) \colon=  |I|\, \sum_{a\in A}  \hat p_{I}(a) \log \hat p_{I}(a)\,
\end{equation}
for any integer interval $I\in \mathcal I$. 
Now, for $I=r:s$ consider the function $h_I^*:I\to \mathbb{R}$ defined as
\begin{equation*}
h^*(u) \;=\; -l^*(r:u) - l^*((u+1):s)\,,\quad  u\,\in\,I\,.
\end{equation*}
As $l^*(I)$ satisfies the hypotheses of Lemma~\ref{conditions_iid_HS} with $\psi(\hat p) = \sum_{a\in A} \hat p(a) \log \hat p(a)$ we have that 
\[
\min_{u\in I\setminus\{s\}\cap  C^*} h^*(u)\;<\; \min_{u\not\in I\setminus\{s\}\cap C^*} h^*(u)
\]
whenever $I\setminus\{s\}\cap  C^*\neq\emptyset$, 
concluding the proof of (H1). The proof of (H2) follows by (PL2) in Assumption~\ref{ass_PL}.\hfill\qedhere
\end{proof}

\bibliographystyle{imsart-nameyear}
\bibliography{references}

\end{document}